\documentclass[letterpaper,onecolumn,11pt,accepted=2021-07-27]{quantumarticle}
\pdfoutput=1
\usepackage[utf8]{inputenc}
\usepackage[english]{babel}
\usepackage[T1]{fontenc}
\usepackage[numbers,sort&compress]{natbib}
\usepackage{amsmath,amssymb,amsthm,amsfonts,bbm,mathtools,braket,microtype}
\usepackage{doi}
\usepackage{hyperref}
\usepackage{cleveref}

\newtheorem{theorem}{Theorem}
\newtheorem{lemma}[theorem]{Lemma}
\newtheorem{proposition}[theorem]{Proposition}

\theoremstyle{definition}
\newtheorem{definition}[theorem]{Definition}
\newtheorem{remark}[theorem]{Remark}
\newtheorem{notation}[theorem]{Notation}
\newtheorem{observation}[theorem]{Observation}
\newtheorem{fact}[theorem]{Fact}

\renewcommand{\Pr}{\mathop{\bf Pr\/}}
\newcommand{\E}{\mathop{\bf E\/}}

\newcommand{\R}{\mathbbm R}
\newcommand{\N}{\mathbbm N}
\newcommand{\F}{\mathbbm F}

\newcommand{\eps}{\epsilon}
\newcommand{\la}{\langle}
\newcommand{\ra}{\rangle}
\newcommand{\wt}[1]{\widetilde{#1}}
\newcommand{\ol}[1]{\overline{#1}}

\newcommand{\calS}{\mathcal{S}}

\newcommand{\by}{\boldsymbol{y}}
\newcommand{\bA}{\boldsymbol{A}}
\newcommand{\bC}{\boldsymbol{C}}
\newcommand{\bG}{\boldsymbol{G}}
\newcommand{\bH}{\boldsymbol{H}}
\newcommand{\bR}{\boldsymbol{R}}
\newcommand{\bT}{\boldsymbol{T}}

\newcommand{\zero}{\mathtt{0}}
\newcommand{\one}{\mathtt{1}}

\newcommand{\etaest}{\eta_{\mathrm{est}}}
\newcommand{\perr}{p_{\mathrm{err}}}
\newcommand{\estperr}{\widehat{p}_{\mathrm{err}}}

\crefformat{section}{\S#2#1#3} 
\crefformat{subsection}{\S#2#1#3}
\crefformat{subsubsection}{\S#2#1#3}
\crefrangeformat{section}{\S\S#3#1#4 to~#5#2#6}
\crefmultiformat{section}{\S\S#2#1#3}{ and~#2#1#3}{, #2#1#3}{ and~#2#1#3}

\begin{document}

\title{Pauli error estimation via Population Recovery}

\author{Steven T. Flammia}
\affiliation{AWS Center for Quantum Computing, USA}
\affiliation{IQIM, California Institute of Technology, USA}
\email{sflammi@amazon.com}

\author{Ryan O'Donnell}
\affiliation{Computer Science Department, Carnegie Mellon University, USA}
\email{odonnell@cs.cmu.edu}

\maketitle

\begin{abstract}
Motivated by estimation of quantum noise models, we study the problem of learning a Pauli channel, or more generally the Pauli error rates of an arbitrary channel. By employing a novel reduction to the ``Population Recovery'' problem, we give an extremely simple algorithm that learns the Pauli error rates of an $n$-qubit channel to precision~$\eps$ in $\ell_\infty$ using just $O(1/\eps^2) \log(n/\eps)$ applications of the channel. This is optimal up to the logarithmic factors. Our algorithm uses only unentangled state preparation and measurements, and the post-measurement classical runtime is just an $O(1/\eps)$ factor larger than the measurement data size. It is also impervious to a limited model of measurement noise where heralded measurement failures occur independently with probability $\le 1/4$.

We then consider the case where the noise channel is close to the identity, meaning that the no-error outcome occurs with probability $1-\eta$. In the regime of small $\eta$ we extend our algorithm to achieve \emph{multiplicative} precision~$1 \pm \eps$ (i.e., additive precision $\eps \eta$) using just $O(\frac{1}{\eps^2 \eta}) \log(n/\eps)$ applications of the channel.
\end{abstract}

\section{Introduction} \label{sec:intro}

A major challenge in the analysis of engineered quantum systems is estimating and modeling noise.
The most standard theoretical model for noise in the study of quantum error correction and fault tolerance~\cite{Ter15} is the $n$-qubit \emph{Pauli channel}:
\begin{equation}    \label{eqn:PC}
    \rho \mapsto \sum_{C \in \{0,1,2,3\}^n} p(C) \cdot \sigma_C \rho \sigma_C^\dagger.
\end{equation}
Here $\sigma_C = \sigma_{C_1} \otimes \cdots \otimes \sigma_{C_n}$ is a tensor product of the Pauli operators $\sigma_0, \sigma_1, \sigma_2, \sigma_3$, and $p$ is a probability distribution on $\{0,1,2,3\}^n$.
The numbers~$p(C)$ are referred to as the \emph{Pauli error rates}.
Additional motivation for the Pauli channel model comes from the practical technique of randomized compiling~\cite{Knill2005,WE16}, which converts a general noise channel~$\Lambda$ (with potentially coherent errors) to a Pauli channel~$\Lambda_P$ having the same process fidelity as the original channel.
We refer to the $p(C)$ values for $\Lambda_P$ as the ``Pauli error rates'' of the original general channel~$\Lambda$.

Given an experimental setup (possibly with randomized compiling), a natural challenge is to diagnose errors in the system via \emph{Pauli error estimation}.
Here the goal is to estimate the large Pauli error rates of an unknown channel by preparing states, passing them through the channel, and measuring them.
The main desideratum is to minimize the number of measurements; additionally one would like to use simple state preparation and measurement processes and minimal computational overhead.
We remark that full tomography for arbitrary $n$-qubit channels requires at least $4^n/\eps^2$ measurements, with more practical methods requiring at least~$8^n/\eps^2$.

In this work, we give very simple and efficient algorithms for learning all of the large Pauli error rates of an $n$-qubit channel.
Our first main result is the following:
\begin{theorem}
\label{thm:intro-pop}
    There is a learning algorithm that, given parameters $0 < \delta, \eps < 1$, as well as access to an $n$-qubit channel with Pauli error rates~$p$, has the following properties:
    \begin{itemize}
        \item It prepares $m = O(1/\eps^2)\cdot \log(\frac{n}{\epsilon \delta})$ unentangled $n$-qubit pure states, where each of the $mn$ $1$-qubits states is chosen uniformly at random from $\{\ket{0}, \ket{1}, \ket{+}, \ket{-}, \ket{i}, \ket{-i}\}$;
        \item It passes these $m$ states through the Pauli channel.
        \item It performs unentangled measurements on the resulting states, with each qubit being measured in either the $\{\ket{0}, \ket{1}\}$-basis, the $\{\ket{+}, \ket{-}\}$-basis, or $\{\ket{i}, \ket{-i}\}$-basis.
        \item It performs an $O(mn/\eps)$-time classical post-processing algorithm on the resulting $mn$ measurement outcome bits.
        \item It outputs hypothesis Pauli error rates $\widehat{p}$ in the form of a list of at most $\frac{4}{\eps}$ pairs $(C, \widehat{p}(C))$, with all unlisted $\widehat{p}$ values treated as~$0$.
    \end{itemize}
    The algorithm's hypothesis $\widehat{p}$ will satisfy $\|\widehat{p} - p\|_\infty \leq \eps$ except with probability at most~$\delta$.
\end{theorem}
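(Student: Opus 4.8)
The plan is to reduce Pauli error estimation to an exceptionally clean instance of population recovery and then run a Goldreich--Levin/Kushilevitz--Mansour-style branch-and-bound search for heavy hitters. First I would analyze a single run of the prepare-and-measure experiment. Identifying the single-qubit Paulis with $\{I,X,Y,Z\}$, drawing one of the six states $\ket{0},\ket{1},\ket{+},\ket{-},\ket{i},\ket{-i}$ uniformly is the same as picking a basis $\tau_j\in\{X,Y,Z\}$ and a sign $s_j\in\{\pm1\}$ uniformly, preparing qubit~$j$ in the $s_j$-eigenstate of $\tau_j$, and then measuring qubit~$j$ in the $\tau_j$-basis to obtain $t_j\in\{\pm1\}$. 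For a Pauli channel, conditioned on the drawn error $\boldsymbol C\sim p$ the outcome is deterministic: $s_jt_j=+1$ exactly when $\sigma_{C_j}$ commutes with $\tau_j$, i.e.\ when $C_j\in\{I,\tau_j\}$. Recording the $2$-element set $S_j=\{I,\tau_j\}$ when $s_jt_j=+1$ and $S_j=\{X,Y,Z\}\setminus\{\tau_j\}$ when $s_jt_j=-1$, a short case check shows that conditioned on $\boldsymbol C_j=v$ the set $S_j$ is \emph{uniform over the three $2$-subsets of $\{I,X,Y,Z\}$ containing $v$}, independently across the coordinates. Hence one experiment produces exactly a sample from the population-recovery channel: draw $\boldsymbol C\sim p$, then independently reveal each coordinate as a uniformly random $2$-subset containing it. (For a general channel the outcome is not deterministic, but averaging over the random signs $s_j$ gives every estimator below the same expectation as in the Pauli-channel case --- equivalently, one may work with the randomly-compiled Pauli channel $\Lambda_P$ --- so the analysis is unchanged.)

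The key technical point is a \emph{bounded} unbiased estimator for prefix masses. For a symbol $c$ put $f_c(S)=\bbone[c\in S]-\tfrac12\,\bbone[c\notin S]$; a two-line computation against the channel above gives $\E[f_c(S_j)\mid\boldsymbol C_j=v]=\bbone[v=c]$. So for any prefix $C^*\in\{I,X,Y,Z\}^k$ the product $\prod_{i=1}^k f_{C^*_i}(S_i)$ is an unbiased estimator of $\mu_k(C^*):=\Pr_{\boldsymbol C\sim p}[\boldsymbol C \text{ has prefix } C^*]$, and --- since each factor lies in $\{1,-\tfrac12\}$ --- its absolute value is at most~$1$ no matter how large $k$ is. Thus Hoeffding's inequality estimates any one $\mu_k(C^*)$ to additive error $\eps/4$ with failure probability $\delta'$ from $O(\eps^{-2}\log(1/\delta'))$ of the $m$ samples. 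This boundedness is exactly what keeps the sample complexity free of any $n$-dependence beyond the logarithm, and it is the one spot where the commutation geometry of the Paulis is genuinely used; establishing it (together with the exact form of the reduction in the first step) is the main obstacle --- the rest is routine.

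Finally, run the standard branch-and-bound search. Maintain a list $\mathcal L_k$ of candidate length-$k$ prefixes, starting from the empty prefix; at step~$k$, for every $C^*\in\mathcal L_{k-1}$ and every symbol $a$, estimate $\mu_k(C^*a)$ and keep $C^*a$ in $\mathcal L_k$ iff the estimate exceeds $\eps/2$. Since the masses of disjoint prefix-classes sum to at most~$1$, at most $4/\eps$ candidates survive each pruning, so $O(n/\eps)$ estimates are made in all; taking $\delta'=\delta\cdot\Theta(\eps/n)$ makes every estimate accurate simultaneously except with probability $\delta$, which gives $m=O(\eps^{-2})\log(\tfrac{n}{\eps\delta})$. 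With threshold $\eps/2$ and accuracy $\eps/4$ one checks that every $C$ with $p(C)\ge\eps$ keeps all of its prefixes and is output, that every reported estimate is within $\eps$ of $p(C)$, and that $p(C)<\eps$ for every unlisted $C$; together these give $\|\widehat p-p\|_\infty\le\eps$, with an output list of size at most $4/\eps$. For the $O(mn/\eps)$ runtime, store for each sample and each surviving prefix the running product and update it by a single multiplication whenever a prefix is extended, so each of the $n$ levels costs $O(m/\eps)$.
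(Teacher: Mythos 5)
Your proposal is correct and follows essentially the same route as the paper: your ``uniform $2$-subset containing the symbol'' channel and the estimator $\prod_i f_{C^*_i}(S_i)$, with each factor in $\{1,-\tfrac12\}$, are exactly the paper's reduction to the crossover-$\tfrac13$ $Z$-channel and its estimator $(-1/2)^{|\bA \star B +_2 \bR|}$, and your branch-and-prune with threshold $\eps/2$, accuracy $\eps/4$, shared sample batch, running-product update, and random-sign (Pauli-twirl) handling of general channels matches the paper's algorithm step for step. No gaps beyond the level of detail the paper itself defers to the standard population-recovery analysis.
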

Note that our ``sample complexity'' of $\wt{O}(1/\eps^2)$ is optimal up to the logarithmic term: The task of estimating Pauli error rates strictly (and vastly) generalizes the problem of estimating the bias of an unknown coin to additive precision~$\eps$ (and confidence~$1-\delta$), and this is known to require $\Theta(1/\eps^2) \cdot \log(1/\delta)$ coin flips.
For comparison of our bounds with previous work~\cite{FW20,Harper2020,Harper2021}, see \Cref{sec:prev}.

When the channel is modeling quantum noise, one hopes and expects that the nontrivial error rate, $\eta = 1 - p(0^n)$, is small.
In this case, a natural and more ambitious goal is to first estimate~$\eta$, and then to estimate all other Pauli error rates to \emph{multiplicative} precision~$1\pm \eps$; i.e., additive precision $\pm \eps \eta$.
(This ambition was also pursued in~\cite{FW20,Harper2020}.)
Here the ideal sample complexity would be $O(\frac{1}{\eps^2 \eta})$.\footnote{Again, one can compare the task to the vastly simpler one of estimating the face probabilities of a $6$-sided die that comes up ``$1$'' with probability $1-\eta$.
When rolling many times, one obtains a non-$1$ outcome roughly every~$1/\eta$ rolls.
Thus the task becomes very similar to estimating the face probabilities of a $5$-sided die to additive precision~$\eps$, but with a $1/\eta$ ``slowdown''.}
If one uses our \Cref{thm:intro-pop} as a black box, it would use~$\wt{O}(\frac{1}{\eps^2 \eta^2})$ measurements.
The extra factor of $1/\eta$ here is quite undesirable (as one might imagine a typical parameter setting to be something like $\eta = 10^{-2}$, $\eps = 10^{-1}$).
We show that it can be eliminated:
\begin{theorem}\label{thm:intro-pop2}
    In the setting of \Cref{thm:intro-pop}, suppose the overall error rate is $\eta = 1 - p(0^n)$.
    One can augment the algorithm so that, given in addition a ``noise floor'' parameter~$0 < \eta_0 < 1$, it has the following properties:
    \begin{itemize}
        \item It first makes at most $m_0 \coloneqq O(1/\eta_0) \cdot \log(1/\delta)$ measurements (as in \Cref{thm:intro-pop}).
        \item It does $O(m_0 n)$-time classical processing, then either outputs ``$\eta \leq \eta_0$'' and halts, or proceeds.
        \item It then operates as in \Cref{thm:intro-pop}, but makes $m \coloneqq O(\frac{1}{\eps^2 \eta}) \cdot \log(\tfrac{n}{\eps \delta})$ measurements.
      \end{itemize}
    Its outputs are correct, with a guarantee of $\|\widehat{p} - p\|_\infty \leq \eps \eta$, except with probability at most~$\delta$.
\end{theorem}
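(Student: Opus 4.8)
\emph{Overview.} The plan is to split the algorithm into two phases. A cheap \emph{calibration phase} uses only $m_0 = O(1/\eta_0)\cdot\log(1/\delta)$ measurements of the kind in \Cref{thm:intro-pop}, and either certifies ``$\eta\le\eta_0$'' (and halts) or returns a constant-factor estimate $\wh\eta$ of $\eta$. The \emph{main phase} then re-runs the algorithm of \Cref{thm:intro-pop} almost verbatim, but with its sample size recalibrated to $\wh\eta$ and its target precision set to $\eps\eta$ instead of $\eps$; the key point is that a channel this close to the identity makes the estimators inside the \Cref{thm:intro-pop} algorithm low-variance, so that $O(\tfrac{1}{\eps^2\eta})\log(\tfrac{n}{\eps\delta})$ samples already suffice for additive error $\eps\eta$, rather than the $\wt O(\tfrac{1}{\eps^2\eta^2})$ one gets from a black-box invocation.

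\emph{Calibration.} For each of the $m_0$ calibration samples I would prepare every qubit in a uniformly random single-qubit Pauli eigenstate, pass the state through the channel, and measure each qubit in the \emph{same} Pauli basis in which it was prepared; let $\zeta^{(j)}_i\in\{\pm1\}$ record whether the $i$th outcome of the $j$th sample agrees with the corresponding prepared eigenstate. If the error on the $j$th sample is $\sigma_C$, then averaging over the random basis and sign of qubit $i$ gives $\E[\zeta^{(j)}_i\mid C]=1$ when $C_i=0$ and $-\tfrac13$ otherwise, so $\E\big[\prod_{i=1}^n\zeta^{(j)}_i\big]=\sum_C p(C)\,(-\tfrac13)^{|C|}$, where $|C|$ counts the non-identity tensor factors of $C$. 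Since $|C|\ge1$ for every $C\ne 0^n$, this quantity lies within $\tfrac\eta3$ of $p(0^n)=1-\eta$, so $Z^{(j)}\coloneqq\tfrac12\big(1-\prod_i\zeta^{(j)}_i\big)\in\{0,1\}$ is a Bernoulli variable with mean $q\in[\tfrac\eta3,\tfrac{2\eta}{3}]$, and $Z^{(j)}=0$ on every error-free sample. A multiplicative Chernoff bound shows that $m_0=O(\tfrac1{\eta_0})\log(\tfrac1\delta)$ samples suffice, except with probability $\le\delta$, either to certify $q<\tfrac23\eta_0$ (whereupon we output ``$\eta\le\eta_0$'' and halt) or to return $\wh\eta\coloneqq 2\,\overline{Z}$ with $\tfrac12\eta\le\wh\eta\le2\eta$ (and $\wh\eta\ge\Omega(\eta_0)$). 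The classical work is $O(m_0 n)$, and the usual constant-factor slack in the threshold near $\eta\approx\eta_0$ is absorbed by rescaling $\eta_0$.

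\emph{Main phase.} Here one runs the \Cref{thm:intro-pop} algorithm with no structural change --- the same random product-state preparations, the same random Pauli-basis measurements, and the same Population-Recovery/Goldreich--Levin post-processing --- but targeting additive precision $\eps\eta$ and taking only $m\coloneqq O(\tfrac{1}{\eps^2\wh\eta})\log(\tfrac{n}{\eps\delta})=O(\tfrac{1}{\eps^2\eta})\log(\tfrac{n}{\eps\delta})$ samples. The analysis of \Cref{thm:intro-pop} then carries over with a single change of concentration inequality. Because $\eta=1-p(0^n)$ is small, a $(1-\eta)$-fraction of the samples are untouched by the channel (equivalently, all Pauli fidelities lie in $[1-2\eta,1]$), and on such an error-free sample the internal estimator for any fixed rate $p(C)$, $C\ne 0^n$, registers essentially nothing; hence per sample each estimator is a bounded quantity that is appreciably nonzero only on an event of probability $O(\eta)$, so it has variance $O(\eta)$ rather than $O(1)$. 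Feeding this into a Bernstein bound in place of the Hoeffding bound used for \Cref{thm:intro-pop}, $m=O\big(\tfrac{\eta}{(\eps\eta)^2}\big)\log(\tfrac{n}{\eps\delta})=O(\tfrac{1}{\eps^2\eta})\log(\tfrac{n}{\eps\delta})$ samples make each of the $\poly(n/\eps)$ estimators used in the recursion within $\eps\eta$ of its mean, simultaneously. The remainder of the \Cref{thm:intro-pop} argument is then reproduced with $\eps$ replaced by $\eps\eta$ throughout; in particular the output list still has size $O(1/\eps)$, since among $C\ne 0^n$ at most $2/\eps$ of the $p(C)$ can be as large as $\tfrac12\eps\eta$ (they sum to $\eta$).

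\emph{Main obstacle.} The genuinely new work is the variance bound in the main phase: one must revisit the concrete estimator(s) constructed in the proof of \Cref{thm:intro-pop} and verify that, for an $\eta$-close-to-identity channel, each of them is nonzero (or differs from its error-free value) only on the $O(\eta)$-probability event that a visible nontrivial Pauli acted --- so that Bernstein concentration actually yields the factor-$1/\eta$ saving --- and then check that the union bound over the recursion's estimators still costs only $\log(n/\eps\delta)$ and that the errors accumulated along the Goldreich--Levin recursion remain below $\eps\eta$. Everything else --- the calibration test, substituting $\wh\eta$ for $\eta$ to size the second phase, and splitting the failure probability $\delta$ between the phases --- is routine.
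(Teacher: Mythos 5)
Your overall architecture---a cheap calibration phase producing a constant-factor estimate of $\eta$, followed by re-running the \Cref{thm:intro-pop} algorithm with a Bernstein bound in place of Hoeffding---is exactly the paper's route, and your calibration phase is a correct variant of the paper's \Cref{lem:approx-eta} (the paper uses the indicator that the readout is non-$\zero^n$ and a geometric/median-of-estimates argument; your odd-parity statistic with mean in $[\tfrac49\eta,\tfrac23\eta]$ works just as well).

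The gap is in the main phase, at precisely the point you flag as the ``main obstacle.'' Your claim that on an error-free sample the internal estimator for $p(B)$, $B\ne 0^n$, ``registers essentially nothing'' is false for the estimator actually used in \Cref{thm:individ}: that estimator is $\bH = (-1/2)^{|\bA\star B +_2 \bR|}$, and when the channel outcome is $0^n$ we have $\bR=\zero^n$ and $\bH = (-1/2)^{|\bA\star B|}$, which is $\pm 2^{-k}$ with $k$ at most the weight of $B$---for a weight-one $B$, $\E[\bH^2\mid \bC=0^n]=\tfrac13+\tfrac23\cdot\tfrac14=\tfrac12$. So $\Var[\bH]=\Theta(1)$, Bernstein buys nothing over Hoeffding, and the factor-$1/\eta$ saving does not follow as stated. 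The repair (which is what the paper does) is your parenthetical ``differs from its error-free value'' made precise: estimate instead $\bH' = \bH - (-1/2)^{|\bA\star B|}$. The subtracted term is computable from $\bA$ and $B$ alone; it has expectation zero over $\bA\sim\{1,2,3\}^n$ exactly because $B\ne 0^n$, so unbiasedness is preserved; and $\bH'$ vanishes identically on the event $\bC=0^n$, giving $\E[(\bH')^2]\le 4\eta$ and hence the claimed $O(\tfrac{1}{\eps^2\eta})\log(1/\delta_0)$ samples via Bernstein. For $B=0^n$ one instead applies a multiplicative Chernoff bound to $1-\bH\in[0,2]$, which has mean $\eta$. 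With these two estimator-level facts supplied, the remainder of your argument (union bound over the $O(n/\eps)$ prefixes, list size $O(1/\eps)$ because the nonzero rates sum to $\eta$) goes through as you describe.
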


Finally, we show that our algorithm can be made impervious to a limited amount of measurement noise.
Specifically, suppose that our measuring devices have the following property:
When measuring a $1$-qubit state from $\{\ket{0}, \ket{1}, \ket{+}, \ket{-}, \ket{i}, \ket{-i}\}$ in one of the bases $\{\ket{0}, \ket{1}\}$, $\{\ket{+}, \ket{-}\}$, or $\{\ket{i}, \ket{-i}\}$, the device fails (reading out~``$\texttt{?}$'') with probability~$\nu$, and otherwise behaves ideally.
We assume that the failures are independent, and that the algorithm may know the parameter~$\nu$ (thanks to prior estimation).
In this case, we will see that it is almost automatic to obtain the following extension:
\begin{theorem}                                     \label{thm:pop4}
    \Cref{thm:intro-pop2} continues to hold for any any constant $\nu \leq \frac14$.
\end{theorem}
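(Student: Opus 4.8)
The plan is to isolate the single statistical primitive on which the algorithm of \Cref{thm:intro-pop,thm:intro-pop2} rests and to check that it degrades gracefully under heralded measurement failures exactly up to $\nu=\tfrac14$. Recall how the reduction to Population Recovery harvests information: on qubit~$i$ it effectively learns the bit $b_i = \bbone[C_i\notin\{0,w_i\}] = \bbone[\sigma_{C_i}\text{ anticommutes with }\sigma_{w_i}]$ for a uniformly random $w_i\in\{1,2,3\}$ --- by preparing a $\pm$-eigenstate of $\sigma_{w_i}$, sending it through the channel, measuring in the $\sigma_{w_i}$-basis, and XOR-ing the outcome with the known input bit. For each Pauli label $c$ one then has a bounded function $g_c=g_c(w_i,b_i)$ with the two properties the analysis actually uses: (i)~$\E[g_c\mid C_i]=\bbone[C_i=c]$, so that (using independence of the per-qubit measurements) $\prod_{i\in S}g_{c_i}$ is an unbiased estimator of $\Pr_{C\sim p}[\,C_i=c_i\ \forall i\in S\,]$ --- these prefix-probabilities being exactly the quantities fed to the Goldreich--Levin search that produces the list of $\le 4/\eps$ heavy Paulis; and (ii)~$\lVert g_c\rVert_\infty\le 1$, so that every such product estimator is $[-1,1]$-valued and a Hoeffding bound plus a union bound over the $O(n/\eps)$ estimates gives the stated sample complexity. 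Concretely, $g_c$ can be taken to equal $1$ on the three outcomes $(w,b)$ consistent with $C_i=c$ (those with $b=\bbone[c\notin\{0,w\}]$) and $-\tfrac12$ on the three inconsistent ones.

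For the heralded-failure model I would re-solve for $g_c$ on the enlarged outcome alphabet that gains the symbol $\texttt{?}$, which occurs independently with probability~$\nu$: put $g_c=A$ on the consistent outcomes, $g_c=-B$ on the inconsistent ones, and $g_c(\texttt{?})=T$. Property (i) amounts to $(1-\nu)A+\nu T=1$ together with $\tfrac{1-\nu}{3}(A-2B)+\nu T=0$, forcing $A+B=\tfrac{3}{2(1-\nu)}$. For (ii), Cauchy--Schwarz gives $\E[g_c^2\mid C_i=c]=(1-\nu)A^2+\nu T^2\ge 1$ with equality only when $A=T$; so one is forced to take $A=T=1$, whence $B=\tfrac{1+2\nu}{2(1-\nu)}$ (this reduces to the noiseless $B=\tfrac12$ at $\nu=0$, and is the clean $\pm1$-valued estimator at $\nu=\tfrac14$). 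Now $\lVert g_c\rVert_\infty\le 1\iff B\le 1\iff\nu\le\tfrac14$, and one can also check that $\E[g_c^2\mid C_i\ne c]=\tfrac{1-\nu}{3}+\nu+\tfrac{(1+2\nu)^2}{6(1-\nu)}$ is likewise $\le 1$ precisely when $\nu\le\tfrac14$. Moreover the computation shows that, given (i) and the requirement $\E[g_c^2\mid C_i=c]\le 1$, this $g_c$ is forced uniquely, so I would expect $\nu=\tfrac14$ to be genuinely the cutoff for this route (a larger $\nu$ would need a different algorithm, not merely a different estimator).

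With these $g_c$ in hand the rest of the argument is unchanged: the Goldreich--Levin prefix search, the list of $\le 4/\eps$ heavy Paulis, the $O(mn/\eps)$-time post-processing (which now reads from a constant-size alphabet per qubit rather than a bit, at no asymptotic cost), and --- for \Cref{thm:intro-pop2} --- both the preliminary ``$\eta\le\eta_0$'' test and the multiplicative-precision phase (which extracts the extra $1/\eta$ savings by a finer analysis of these same estimators) all carry over verbatim, with all bounds unchanged up to constants depending only on the (constant) $\nu$, which the $O(\cdot)$'s absorb. I expect the only genuinely load-bearing step to be the short optimization in the second paragraph: the observation that the lone free parameter $g_c(\texttt{?})$ is driven to the value~$1$ by pushing the $C_i=c$ second moment down to its Cauchy--Schwarz floor, after which admissibility collapses to the single inequality $B(\nu)\le 1$, i.e.\ $\nu\le\tfrac14$. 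The one thing to verify with care is that the erasure symbol is indeed the \emph{only} place the algorithm touches raw measurement data, so that substituting the new $g_c$'s suffices --- but every statistic the algorithm forms, including its estimate of $\eta$, is an empirical average of products of $g_c$-type functions, so this is immediate.
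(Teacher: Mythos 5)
Your derivation lands on exactly the estimator the paper uses: your $B=\frac{1+2\nu}{2(1-\nu)}$ is precisely $\frac{r}{1-r}$ for $r=\nu+(1-\nu)\frac13$, i.e.\ the local inverse of the erasure/$Z$-channel whose erasure probability has been inflated from $\frac13$ to $r$ by the heralded failures, and your admissibility condition $B\le 1$ is the paper's $r\le\frac12\iff\nu\le\frac14$ from \Cref{sec:tolerating}. So for the additive-error statement your argument is correct and essentially identical to the paper's, with the bonus of a clean uniqueness argument for the local inverse.

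The one place where ``carries over verbatim'' is too quick is the multiplicative-precision phase, and your closing justification --- that every statistic the algorithm forms is an empirical average of products of $g_c$'s --- is not true there. For $B\ne 0^n$ the estimator in \Cref{sec:mult} is $\bH'=\bH-(-1/2)^{|\bA\star B|}$, and the entire point of the subtraction is that $\bH'$ vanishes identically on the event $\bC=0^n$, which is what yields $\E[(\bH')^2]\le 4\eta$ and hence the Bernstein bound behind the $O(\frac{1}{\eps^2\eta})$ sample complexity. Once measurements can fail, $\bH$ is no longer determined by $(\bA,\bC)$: conditioned on $\bC=0^n$, a coordinate with $\bA_t\star B_t=\one$ contributes $-B$ if it succeeded and $1$ if it failed, so $\bH-(-1/2)^{|\bA\star B|}$ has conditional second moment $\Theta(1)$ rather than $O(\eta)$ and the variance argument fails as stated. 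The fix is short precisely because failures are heralded: subtract instead the value $\bH$ would take if $\bC$ were $0^n$ given the \emph{observed} failure pattern, namely $\prod_{t\ \mathrm{succeeded}}(-B)^{\bA_t\star B_t}$. Each coordinate with $B_t\ne 0$ contributes expectation $\nu+(1-\nu)\bigl(\tfrac13-\tfrac23 B\bigr)=\nu-\nu=0$ to this correction term, so the corrected estimator is still unbiased for $p(B)$, vanishes when $\bC=0^n$, and the Bernstein argument goes through. You should state this adjustment (and note that the $B=0^n$ case and the rough estimation of $\eta$ genuinely are untouched) rather than assert the whole phase carries over unchanged.
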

For the more challenging task of handling general SPAM (state preparation and measurement) error, see the discussion in \Cref{sec:prev}.

\subsection{Techniques}
Our algorithm employs a novel reduction from Pauli error estimation to the task in classical unsupervised learning known as \emph{Population Recovery}.
Population Recovery was introduced by Dvir, Rao, Wigderson, and Yehudayoff in 2012~\cite{DRWY12}, and has been studied in numerous subsequent works~\cite{BIMP13,MS13,LZ15,WY16,DST16,LZ17,PSW17,BCSS19,BCFSS19,DOS20,Nar20}.
A Population Recovery problem is specified by a \emph{classical channel}~$\calS$ --- i.e., a stochastic map $\calS : \Sigma \to \Gamma$ for some finite alphabets $\Sigma, \Gamma$.
The task is to learn an unknown probability distribution~$p$ on $\Sigma^n$ to $\ell_\infty$-error~$\eps$, with the twist being that samples are mediated by the channel.
That is, when the learner requests a sample, first $x \in \Sigma^n$ is drawn according to~$p$, but then only $y = \Sigma(x_1)\Sigma(x_2)\cdots\Sigma(x_n)$ is revealed to the learner.
The most well-studied cases are the binary symmetric channel and the binary erasure channel, the former being noticeably more challenging; lately, the deletion channel has also begun to be studied.
(Each of these channels also requires specifying the crossover/erasure/deletion probability~$r$.)

Our work shows how to efficiently convert the Pauli error estimation task to that of Population Recovery with respect to the so-called \emph{binary $Z$-channel} with crossover probability~$\frac13$.
This is the channel with $\Sigma = \Gamma = \{\zero,\one\}$ in which $\zero$'s are ``transmitted'' correctly, but $\one$'s are flipped to~$\zero$ with probability~$\frac13$.
We observe that the known methods for Population Recovery with respect to the binary erasure channel with erasure probability~$r$ also apply equally well to the $Z$-channel with crossover probability~$r$.
We then use the fact that there is a known, highly efficient Population Recovery algorithm for erasures with probability at most~$\frac12$.~\cite{DRWY12,MS13,DOS20,PSW17}
(Indeed, the fact that even probability~$\frac12$ can be tolerated is the reason our Pauli error estimation algorithm can handle additional measurement noise as in \Cref{thm:pop4}.)

\subsection{Previous and related work} \label{sec:prev}

There are several prior works that study the estimation of so-called \textit{generalized} Pauli channels, which act on a $d$-dimensional quantum system and do not contain explicit tensor product structure. 
These works typically (though not always~\cite{ur_Rehman_2021}) make the much stronger assumption that ideal entangled states can be prepared to assist the channel estimation, and they focus on finding efficient estimators that saturate the Cram\'{e}r-Rao bound. 
Fujiwara and Imai first showed that entanglement-assisted channel estimation of generalized Pauli channels could achieve the Cram\'{e}r-Rao bound~\cite{FI2003}, though much simpler proofs of such theorems are now available~\cite[Exer.~6.51--6.54]{Hayashi2017}. 
In particular, these results show that entanglement-assisted estimation of generalized Pauli channels can be done with a sample complexity of $O(1/\eps^2)$ in the $\ell_\infty$ norm. 

What about the case of $n$-qubit Pauli channels in this entanglement-assisted setting? 
An $\eps$-close estimate in the $\ell_\infty$ norm is also achievable with only $O(1/\eps^2)$ samples in this setting. 
This can be seen by noting that inputting half of a maximally entangled state into a Pauli channel and measuring in a Bell basis gives completely distinguishable outcomes for each Pauli error%
\footnote{This is essentially superdense coding~\cite{Bennett1992}. 
One can show by computing the diamond norm of the difference between two Pauli channels that this strategy has optimal sample complexity up to a constant factor.}.
The problem therefore reduces to estimating a \textit{classical} probability distribution on $4^n$ outcomes, and for this well-studied problem the sample complexity is well known to be $\Theta(1/\eps^2)$ (see Ref.~\cite{Canonne2020} for a simple proof). 
In light of this, one way to interpret our~\Cref{thm:intro-pop} is that entanglement-free estimation of Pauli channels is at most a logarithmic factor away from the optimal sample complexity, at least for the $\ell_\infty$ norm. 

The problem of Pauli error estimation for $n$-qubit channels without entanglement was first studied in depth in work of the first author and Wallman~\cite{FW20}.
It is not possible to directly compare those results with ours, for several reasons.
The most immediate reason is that their complexity bounds typically include a factor of~$\wt{O}(1/\Delta)$, where ``$\Delta$'' is another parameter, the spectral gap of the Pauli channel being learned.
We have $\Delta \le 2\eta$, where $\eta = 1 - p(0^n)$ is the nontrivial error rate, and this is saturated in the most favorable case.
However, in general $\Delta$ may be arbitrarily small, or even zero, for relatively simple channels.
In practice, a user of the algorithm in \cite{FW20} would set a spectral cutoff $\Delta_0$ and allow estimation errors for channel eigenvalues in the interval $(1-\Delta_0,1]$, but no analysis is done in~\cite{FW20} of the extra error incurred by this cutoff.
Thus, in the worst case, their results as formally stated do not give any guarantee.

On the other hand, the results of~\cite{FW20} are impervious to a much more challenging model of measurement error (``SPAM'').
This model imposes that before the learner measures the channel's output, an \emph{additional} unknown channel~$\Xi$ is applied to the state.
(It is assumed that $\Xi$ satisfies the extremely mild condition that its nontrivial error rate is bounded away from~$1$.)
It might seem impossible to disentangle~$\Xi$ from the main channel~$\Lambda$ to be learned, but the authors of~\cite{FW20} use the fact that one is at liberty to pass a state~$\rho$ through $\Lambda$ several times (say, $k$ times) before it is subjected to~$\Xi$; i.e., the learner may obtain $\Xi \Lambda^k \rho$ for $\rho$ and $k \in \N$ of the learner's choosing.
By carefully choosing~$k$ values up to $O(1/\Delta)$, the authors of~\cite{FW20} show that~$\Xi$ can essentially be expunged.
(Note that, in practice, multiple uses of the channel are often far less costly than even a single measurement.)

Finally, the first algorithm in \cite{FW20} judges its hypothesis with respect to the $\ell_2$-norm, rather than the $\ell_\infty$ norm as in this paper.
This distinction is relatively minor, however, as the norms are roughly equivalent for probability distributions: $\|\widehat{p} - p\|_\infty \leq  \|\widehat{p} - p\|_2  \leq \|\widehat{p} - p\|^{1/2}_\infty$, and one may refine this further to take into account dependence on $\eta = 1 - p(0^n)$.

With these caveats, we state (simplifications of) the relevant main results in~\cite{FW20}:
\begin{theorem}[\cite{FW20}]
    There exists a SPAM-tolerant algorithm that makes $\wt{O}(2^n \log(1/\Delta))/\eps^2$ measurements, with $O(1/\Delta)$ channel-uses per measurement, and with high probability outputs an estimate~$\widehat{p}$ of the channel's Pauli error rates~$p$ satisfying $\|\widehat{p} - p\|_2 \leq \eps \eta$.
\end{theorem}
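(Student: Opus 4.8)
Following the approach of~\cite{FW20}, the plan is to pass from the $4^n$ error rates $p(C)$ to their Walsh--Hadamard dual, the Pauli eigenvalues $\lambda_A \coloneqq \sum_C (-1)^{\langle A,C\rangle} p(C)$, where $\langle A,C\rangle \in \F_2$ records whether $\sigma_A$ and $\sigma_C$ anticommute. These are precisely the eigenvalues of the channel, $\Lambda(\sigma_A) = \lambda_A \sigma_A$; the transform is a tensor power of the $4\times4$ Hadamard matrix, hence orthogonal up to scaling, so by Parseval $\|\widehat p - p\|_2$ is exactly the root-mean-square error of the estimates $\widehat\lambda_A$; and every eigenvalue satisfies $1 - \lambda_A = 2\sum_{C \co \langle A,C\rangle = 1} p(C) \in [0,2\eta]$. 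It therefore suffices to estimate all $\lambda_A$ with root-mean-square error $O(\eps\eta)$, and the extra factor $\eta$ is what one should aim to exploit, since a single-shot $\pm1$ measurement of $\sigma_A$ on a channel output has variance $1 - \lambda_A^2 = O(\eta)$ when the channel is near the identity.

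First I would build the single-eigenvalue primitive. Prepare a stabilizer state $\psi$ stabilized by an $n$-qubit Pauli group $\calG \ni \sigma_A$, so $\tr(\sigma_B\,\psi) = 1$ for every $\sigma_B \in \calG$; pass $\psi$ through $\Lambda$ exactly $k$ times; then measure in the stabilizer basis of $\calG$. Reading off the sign of $\sigma_B$ yields a bit $s_B \in \{\pm1\}$ with $\E[s_B] = \lambda_B^k$ in the SPAM-free case, and crucially a single run of this experiment produces such estimators for all $2^n$ indices $B \in \calG$ at once. Taking $\calG$ to be a uniformly random maximal isotropic subspace of $\F_2^{2n}$ (preparable and measurable with a random Clifford), a fixed nonzero $A$ lies in $\calG$ with probability $\Theta(2^{-n})$, so $\wt{O}(2^n/\eps^2)$ runs accumulate $\wt{\Omega}(1/\eps^2)$ hits on every $A$, and a union bound over the $4^n$ eigenvalues costs only an $O(n) = \wt{O}(1)$ factor. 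This is the source of the $2^n$ (rather than $4^n$) scaling.

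The crux, and the reason for the $O(1/\Delta)$ channel uses per measurement and the $\log(1/\Delta)$ overhead, is SPAM tolerance. With an unknown error channel $\Xi$ inserted before the measurement (assumed only to have nontrivial rate bounded away from~$1$) and imperfect state preparation, the primitive instead returns, for each $B$, a bit with expectation $a_B\,\lambda_B^k$ for an unknown $a_B$ with $|a_B|$ bounded below. The plan is to run the primitive at a geometric ladder of depths $k \in \{0,1,2,4,\dots,K\}$ with $K = \Theta(1/\Delta)$, estimate $f_B(k) \coloneqq a_B\,\lambda_B^k$ at each depth, and recover $\lambda_B$ by a robust exponential fit through the points $(k, \widehat f_B(k))$ --- equivalently, after taking logarithms, a weighted linear regression whose slope is $\log\lambda_B$, in which the unknown $\log a_B$ enters only as an intercept and cancels. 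Choosing $K \asymp 1/\Delta$ guarantees that even the slowest relevant eigenvalue, $\lambda_B \approx 1-\Delta$, has decayed to $\lambda_B^K = \Theta(1)$, so the regression is well conditioned; a propagation-of-error estimate then shows the fitted $\widehat\lambda_B$ inherits the target precision from $\wt{O}(1/\eps^2)$ shots at each of $O(\log(1/\Delta))$ depths, giving the stated $\wt{O}(2^n\log(1/\Delta))/\eps^2$ measurements and $O(1/\Delta)$ channel uses per measurement. I expect this SPAM-cancellation step --- in particular quantifying how small $K$ can be taken, and controlling the variance blow-up of the fit when $\lambda_B$ is near~$1$ --- to be the main obstacle; the Fourier setup, the Clifford-covering argument, and the concluding Parseval-plus-Chernoff-plus-union-bound over the $4^n$ coordinates are essentially bookkeeping.
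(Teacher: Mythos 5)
This theorem is quoted from \cite{FW20}; the present paper does not prove it, so there is no internal proof to compare your attempt against. Your reconstruction follows essentially the same route as Flammia--Wallman as summarized here---eigenvalue (Fourier) estimation with stabilizer-basis measurements that recover $2^n$ eigenvalues per setting, a geometric ladder of sequence depths up to $O(1/\Delta)$ to cancel the unknown SPAM prefactors \`a la randomized benchmarking, and Parseval to convert RMS eigenvalue error into the $\|\widehat{p} - p\|_2 \leq \eps\eta$ guarantee---with only cosmetic differences (e.g.\ \cite{FW20} fits the decays directly rather than regressing $\log \lambda_B$, which sidesteps the issue of negative eigenvalues).
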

In the favorable case of $\Delta = \Theta(\eta)$, this is somewhat comparable to our \Cref{thm:intro-pop2}; the above theorem has much better SPAM-tolerance, but a complexity that is greater by roughly~$2^n$.

The authors of~\cite{FW20} also present a heuristic for identifying a set~$S$ corresponding to large Pauli error rates with the following guarantee.
\begin{theorem}[\cite{FW20}]
    For any set $S \subseteq \{0,1,2,3\}^n$, there exists a SPAM-tolerant algorithm that makes $\wt{O}(\log |S|)\log \log(1/\Delta) /\eps^4$ measurements, with $O(1/\Delta)$ channel-uses per measurement, and with high probability outputs estimates~$\widehat{p}(C)$ for each $C \in S$ satisfying $|\widehat{p}(C) - p(C)| \leq \eps \eta$.
\end{theorem}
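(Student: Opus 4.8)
\medskip
\noindent\textit{Proof plan.}~The plan is to work in the Fourier (``Pauli eigenvalue'') picture. For $B\in\{0,1,2,3\}^n$ write $\lambda(B)=\sum_{C}(-1)^{\la B,C\ra_{\mathrm{sp}}}p(C)$ for the eigenvalue of the channel on the operator $\sigma_B$, where $\la\cdot,\cdot\ra_{\mathrm{sp}}$ is the symplectic form; Fourier inversion gives $p(C)=\E_{\bB}\bigl[(-1)^{\la \bB,C\ra_{\mathrm{sp}}}\lambda(\bB)\bigr]$ with $\bB$ uniform on $\{0,1,2,3\}^n$. For $C\neq 0^n$ the identity survives re-centering by any constant, so with $\mu:=p(0^n)=1-\eta$,
\[
p(C)\;=\;\E_{\bB}\bigl[(-1)^{\la \bB,C\ra_{\mathrm{sp}}}(\lambda(\bB)-\mu)\bigr],
\qquad
\Var_{\bB}\bigl[(-1)^{\la \bB,C\ra_{\mathrm{sp}}}(\lambda(\bB)-\mu)\bigr]\;\le\;\littlesum_{C'\neq 0^n}p(C')^2\;\le\;\eta^2,
\]
the last inequality because the $p(C')$ are nonnegative and sum over $C'\neq 0^n$ to $\eta$. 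This variance bound is what will make the complexity independent of $\eta$.

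First I would run a short preliminary stage estimating $\eta$ to relative constant accuracy (costing only $\wt{O}(1/\eps^2)$ measurements), yielding $\widehat\mu=1-\widehat\eta$. The main estimator, for a fixed target $C$, is then: draw $m=\wt{O}(1/\eps^2)$ i.i.d.\ uniform $\bB_1,\dots,\bB_m$; for each $\bB_i$ produce a SPAM-tolerant estimate $\widehat\lambda(\bB_i)$; output $\widehat p(C)=\frac1m\sum_i(-1)^{\la \bB_i,C\ra_{\mathrm{sp}}}(\widehat\lambda(\bB_i)-\widehat\mu)$. The variance bound makes the ``outer'' sampling error at most $\eps\eta$ with probability $1-\delta'$ once $m=\Theta(\log(1/\delta')/\eps^2)$; the re-centering bias from $|\widehat\mu-\mu|$ equals $|\widehat\mu-\mu|\cdot\bigl|\frac1m\sum_i(-1)^{\la \bB_i,C\ra_{\mathrm{sp}}}\bigr|=O(\eta)\cdot O(\eps)$ since the empirical ``DC'' term is $O(\sqrt{\log(1/\delta')/m})$ for random $\bB_i$, hence negligible; and provided each $\widehat\lambda(\bB_i)$ is accurate to within $\eps\cdot\max\{\eta,\,|\lambda(\bB_i)-\mu|\}$, Cauchy--Schwarz together with the variance bound shows the inner errors contribute at most $2\eps\eta$ to the average. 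A union bound (or median-of-means) over the $|S|$ targets costs a further $O(\log(|S|/\delta))$ factor.

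The SPAM-tolerant subroutine for a single $\lambda(B)$ is the cycle/character-benchmarking idea of \cite{FW20}: prepare a uniformly random $\pm1$ eigenstate of $\sigma_B$ (a product of single-qubit stabilizer states), apply the channel $k$ times, let the unknown measurement-error channel $\Xi$ act, and measure $\sigma_B$ in the corresponding product basis, recording the product of single-qubit $\pm1$ outcomes; the expectation is $A_B\,\lambda(B)^k$ for an unknown $A_B$ with $|A_B|$ bounded away from $0$ (using that $\Xi$'s nontrivial error rate is bounded away from $1$). Fitting this geometric decay at a few sequence lengths $k$ and dividing expunges $A_B$ and yields $\widehat\lambda(B)$ (a possibly negative $\lambda(B)$ is handled by reading signs off consecutive lengths). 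Since the relevant gap $\Delta_B=1-|\lambda(B)|\ge\Delta$ is unknown in advance, I would locate $k_{\max}=\Theta(1/\Delta_B)$ by a doubling search and then refine — this is the source of the $\log\log(1/\Delta)$ overhead, and it caps the channel-uses per measurement at $O(1/\Delta)$. A standard randomized-benchmarking variance analysis shows the statistical error on $\lambda(B)$ after $N$ sequences scales like $\Delta_B/\sqrt N$, so the factors of $\eta$ cancel and $O(1/\eps^2)$ sequences suffice to reach the $B$-dependent precision above. Multiplying the three stages gives the claimed $\wt{O}(\log|S|)\,\log\log(1/\Delta)/\eps^4$.

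The main obstacle I anticipate is making the eigenvalue subroutine fully rigorous at the advertised cost: one must (i) choose the sequence length adaptively against an unknown, possibly tiny spectral gap; (ii) control the \emph{bias} (not just the variance) of the geometric fit, since a systematic fit bias common to all $\bB_i$ survives the outer average only up to the $O(\eps)$ empirical-DC factor and so need only be $O(\eta)$ — precisely the slack that permits the $\eta$-free complexity, but this must be checked; and (iii) carry out the $B$-dependent target-precision bookkeeping, i.e.\ estimating $\lambda(\bB_i)$ to precision $\eps\max\{\eta,|\lambda(\bB_i)-\mu|\}$ (itself unknown, so reached adaptively), so that the rare $\bB_i$ with $|\lambda(\bB_i)|$ far from $1$ do not blow up the sample count. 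Everything else — the Fourier identity, the variance bound, and the union bound over $S$ — is routine.
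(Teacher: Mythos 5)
There is nothing in this paper to compare your argument against: the statement you were asked about is not proven here at all. It is quoted (as an explicit simplification) from the earlier work~\cite{FW20}, and the surrounding text even flags its limitations (it is described as a heuristic, with no guarantee that the set $S$ contains the largest error rates). So your proposal can only be judged as a reconstruction of the \cite{FW20} argument, not against an in-paper proof.

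As such a reconstruction, your outline follows the same general route as \cite{FW20} (Fourier/eigenvalue picture, SPAM-tolerant eigenvalue estimation by fitting exponential decays over $O(1/\Delta)$ channel uses, then inverting to get $p(C)$ for $C\in S$ with data reuse so that only a $\log|S|$ factor is paid), and your outer-stage bookkeeping is sound: the identity $p(C)=\E_{\bB}[(-1)^{\la \bB,C\ra}(\lambda(\bB)-\mu)]$ for $C\neq 0^n$, the variance bound $\sum_{C'\neq 0^n}p(C')^2\le \eta^2$, and the resulting $\wt{O}(1/\eps^2)$ outer samples for $\eps\eta$ accuracy are all correct. But the substance of the cited theorem lives precisely in the parts you defer: the SPAM-tolerant eigenvalue subroutine with $O(1/\eps^2)$ measurements per eigenvalue, $O(1/\Delta)$ channel uses, and only $\log\log(1/\Delta)$ overhead for choosing sequence lengths against unknown gaps, together with control of the \emph{bias} of the geometric fit. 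On the bias point your sketch is too optimistic: the fit bias is $B$-dependent, so it does not merely couple to the $O(\eps)$ empirical ``DC'' term; its correlation with the sign pattern $(-1)^{\la \bB,C\ra}$ can be as large as its root-mean-square over $\bB$, so you need bias at the $\eps\eta$ scale (in an averaged sense), not $O(\eta)$. Establishing that, and the adaptive sequence-length selection, is exactly the technical content of \cite{FW20}; as written, your proposal is a plausible plan rather than a proof.
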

However, no guarantee is proven that the set $S$ will contain the $|S|$ largest error rates. 

The results in~\cite{Harper2021} are also somewhat incomparable to the present paper.
The authors analyze Pauli channels with a recovery guarantee in the $\infty$-norm, but under the assumption that the Pauli channel has sparse and random support, and that the nonzero error rates are not too small (greater than some fixed $\eps_0$).
While the sparsity assumption is not critical in that analysis (the algorithm will approximate error rates smaller than $\eps_0$ as zero with high probability), the random support assumption is used in an essential way.
This is an undesirable assumption since it is very unlikely to hold in practice.\footnote{Perhaps surprisingly, the algorithm performs well on real data despite grossly violating this assumption~\cite{Harper2021}.}
The sample complexity is also not stated directly in terms of quantum measurements, but rather in terms of queries to a ``noisy eigenvalue oracle'' with Gaussian noise.
While this noisy oracle can be approximated by quantum measurements and finite sample complexity, quantum noise is not exactly Gaussian, so no direct comparison with the present work is possible without further analysis.

We remark that the techniques used in~\cite{FW20,Harper2021} are Fourier-based, and the heuristic from~\cite{FW20} described above is similar to the Goldreich--Levin learning algorithm~\cite{GL89}. 
In \Cref{sec:fourier}, we give an alternate Fourier-based approach to Pauli error estimation, one that is equivalent to our Population Recovery method ``in disguise''; in fact, the Goldreich--Levin algorithm becomes equivalent to the Individual-to-Population Recovery reduction!

It is our belief that these Fourier techniques can actually be used to provide a common generalization of the results of this paper and of~\cite{FW20}; i.e., efficient SPAM-tolerant Pauli error estimation with no dependence on~$\Delta$.
We leave this for future work.

\section{Notation}

\begin{notation}
    The \emph{$1$-qubit Pauli matrices} are the unitary, hermitian matrices
    \[
        \sigma_0 = \begin{pmatrix} 1 & 0 \\
                                                           0 & 1 \end{pmatrix}, \qquad
        \sigma_1 = \sigma_x = \begin{pmatrix} 0 & 1 \\
                                                           1 & 0 \end{pmatrix}, \qquad
        \sigma_2 = \sigma_y = \begin{pmatrix} 0 & -i \\
                                                           i & \phantom{+}0 \end{pmatrix}, \qquad
        \sigma_3 = \sigma_z = \begin{pmatrix} 1 & \phantom{+}0 \\
                                                         0 & -1 \end{pmatrix}.
    \]
    As operators on the Bloch sphere, $\sigma_1, \sigma_2, \sigma_3$ act as rotations by~$\pi$ about the $1$st, $2$nd, $3$rd axis (aka $x$-, $y$-, $z$-axis), respectively.
    More generally, an \emph{$n$-qubit Pauli matrix}, indexed by string $A \in \{0,1,2,3\}^n$, is
    $
        \sigma_A = \bigotimes_{j=1}^n \sigma_{A_j}.
    $
\end{notation}

\begin{notation}
    For $a, b \in \{0,1,2,3\}$, there is some $c \in \{0,1,2,3\}$ such that $\sigma_a \sigma_b = \sigma_c$, up to a global phase.
    We introduce the notation $a \oplus b$ (equivalently, $b \oplus a$) for this~$c$; so, e.g, $1 \oplus 3 = 2$, $0 \oplus b = b$, etc.
    We extend the notation coordinate-wise: if $A, B \in \{0,1,2,3\}^n$, then $A \oplus B = (A_1 \oplus B_1, \dots, A_n \oplus B_n) \in \{0,1,2,3\}^n$ (and so $\sigma_A \sigma_B = \sigma_{A \oplus B}$, up to a global phase).
\end{notation}

\begin{notation}
    We write the orthonormal eigenbasis for the Pauli operator $\sigma_x$ as $\ket{\chi^1_{+}}, \ket{\chi^1_{-}}$.
    On the Bloch sphere these are the two unit vectors pointing in the positive (respectively, negative) direction along the $1$st ($x$-)axis; they are often called \mbox{$\ket{+}$, $\ket{-}$}.
    We use similar notation $\ket{\chi^2_{+}}, \ket{\chi^2_{-}}$ (often called $\ket{i}, \ket{-i}$) and $\ket{\chi^3_{+}}, \ket{\chi^3_{-}}$ (often called $\ket{0}, \ket{1}$) for $\sigma_2$ and $\sigma_3$.
\end{notation}

\begin{notation}    \label{not:commute-anticommute}
    For $a, b \in \{0,1,2,3\}$ we have that $\sigma_b \ket{\chi^a_+}$ is (up to a phase) $\ket{\chi^a_{\pm}}$, with the subscript being~${+}$ if $\sigma_a$ and $\sigma_b$ commute, and~${-}$ if $\sigma_a$ and $\sigma_b$ anticommute.
    To capture this, it will be convenient to introduce the following notation:
    \[
        a \star b = b \star a = \begin{cases}
                                                \zero & \text{if } |\{a, b, a \oplus b\}| < 3, \text{ i.e., } \sigma_a, \sigma_b \text{ commute;} \\
                                                \one & \text{if } |\{a, b, a \oplus b\}| = 3, \text{ i.e., } \sigma_a, \sigma_b \text{ anticommute.}
                                        \end{cases}
    \]
    Thus $\sigma_b \ket{\chi^a_{+}} = \ket{\chi^a_{(-1)^{a \star b}}}$ (up to a phase).
    We extend this notation coordinate-wise, writing $A \star B = (A_1 \star B_1, \dots, A_n \star B_n) \in \{\zero, \one\}^n$ for $A, B \in \{0,1,2,3\}^n$.
    For example,
    $
        (0,0,3,2,1) \star (3,1,1,2,2) = (\zero, \zero, \one, \zero, \one).
    $
\end{notation}

\begin{fact}\label{fact:dot-product}
    If we identify $\{0, 1, 2, 3\}$ with $\F_2^2$ by writing numbers in base~$2$, then $\oplus$ corresponds to the usual vector addition in $\F_2^2$, and $\star$ corresponds to the ``symplectic'' product: $a \star b = (a_1, a_2) \star (b_1, b_2) = a_1b_2 + a_2b_1$.
    This lets us see that $a \star (b \oplus c) = (a \star b) + (a \star c) \bmod 2$.
\end{fact}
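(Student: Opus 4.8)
The plan is to prove all three assertions at once by first putting Pauli matrices into a normal form that makes the group law transparent. For $v = (v_1,v_2) \in \F_2^2$, set $W_v = \sigma_y^{v_1}\sigma_x^{v_2}$ (with $\sigma_x^0 = \sigma_y^0 = \sigma_0$). Checking the four cases gives $W_{(0,0)} = \sigma_0$, $W_{(0,1)} = \sigma_1$, $W_{(1,0)} = \sigma_2$, and $W_{(1,1)} = \sigma_y\sigma_x = -i\,\sigma_3$; so in every case $W_v$ equals $\sigma_a$ up to a global phase, where $a = 2v_1+v_2$ is exactly the integer whose base-$2$ digits are $v_1v_2$. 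Thus the identification in the statement is precisely the map $a \leftrightarrow v$ under which $\sigma_a$ equals $W_v$ up to a phase.

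The one computation that drives everything uses only that $\sigma_x,\sigma_y$ anticommute and square to $\sigma_0$: for all $v,w \in \F_2^2$,
\[
  W_vW_w = \sigma_y^{v_1}\sigma_x^{v_2}\sigma_y^{w_1}\sigma_x^{w_2} = (-1)^{v_2w_1}\,\sigma_y^{v_1+w_1}\sigma_x^{v_2+w_2} = (-1)^{v_2w_1}\,W_{v+w},
\]
where $v+w$ is vector addition in $\F_2^2$ (we used $\sigma_y^{v_1+w_1} = \sigma_y^{v_1 \oplus w_1}$ since $\sigma_y^2 = \sigma_0$, and likewise for $\sigma_x$). From this the first claim is immediate: writing $\sigma_a = \alpha W_v$, $\sigma_b = \beta W_w$ for phases $\alpha,\beta$ gives $\sigma_a\sigma_b = \alpha\beta(-1)^{v_2w_1}W_{v+w}$, which is a phase times $\sigma_{a'}$ for the integer $a'$ identified with $v+w$; by the defining property of $\oplus$ this forces $a\oplus b \leftrightarrow v+w$, i.e.\ $\oplus$ is addition in $\F_2^2$.

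For the second claim, commutation of $\sigma_a,\sigma_b$ is unaffected by the global phases, so it is equivalent to ask whether $W_v,W_w$ commute; applying the displayed identity both ways gives $W_vW_w = (-1)^{v_1w_2 + v_2w_1}W_wW_v$, so $\sigma_a,\sigma_b$ commute exactly when $v_1w_2 + v_2w_1 \equiv 0 \pmod{2}$. Comparing with the definition of $\star$ yields $a\star b = a_1b_2 + a_2b_1 \bmod 2$. Finally, the identity $a\star(b\oplus c) = (a\star b) + (a\star c) \bmod 2$ is then just $\F_2$-bilinearity of the symplectic form, using that $b\oplus c$ corresponds to $(b_1+c_1,\,b_2+c_2)$: indeed $a_1(b_2+c_2) + a_2(b_1+c_1) = (a_1b_2+a_2b_1) + (a_1c_2+a_2c_1)$. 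I do not expect a genuine obstacle; the only care needed is to fix the base-$2$ convention consistently — so that $2 = 10_2 \leftrightarrow \sigma_2$ rather than $\sigma_3$ — and to track the factors of $i$ and $-1$ accurately. As a sanity check, or a fully elementary alternative, all three statements can instead be verified by brute force over the $16$ pairs $(a,b)$ and $64$ triples $(a,b,c)$, but the normal-form argument is shorter and explains why the symplectic product is the object that appears.
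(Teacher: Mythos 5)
Your proof is correct. Note that the paper offers no argument for this statement at all: it is recorded as a \emph{Fact} and left as a routine verification, the implicit justification being exactly the finite check over the $16$ pairs (and, for the last identity, bilinearity of the displayed formula) that you mention as a fallback. Your normal-form route is a genuinely different, more structural way to get there: writing each Pauli up to phase as $W_v=\sigma_y^{v_1}\sigma_x^{v_2}$ reduces all three claims to the single commutation computation $W_vW_w=(-1)^{v_2w_1}W_{v+w}$, from which $\oplus$ = vector addition, the symplectic formula for $\star$, and its $\F_2$-bilinearity all fall out at once; the only implicit ingredient you use beyond the computation is that distinct Pauli matrices are not scalar multiples of one another, which pins down $a\oplus b\leftrightarrow v+w$. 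What this buys over the paper's tacit case-check is an explanation of \emph{why} the symplectic form appears (it is the phase exponent in the Weyl-type normal form), and it generalizes verbatim to $n$ qubits and to qudit Weyl operators; the cost is having to fix the base-$2$ convention ($2=10_2\leftrightarrow\sigma_2$) and track phases carefully, which you do correctly. Either argument is acceptable here.
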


\begin{notation}
    For a quantity $x$, we denote an estimate of $x$ by $\widehat{x}$. 
    We use boldface font (e.g., $\bA$) to denote a random variable. 
    If $\bA$ is drawn from the distribution $p$ we denote this by $\bA\sim p$, and let $A$ denote a concrete assignment to the variable $\bA$. 
    Addition (of scalars or vectors) modulo 2 is denoted $+_2$. 
    The Fourier transform of $f$ is denoted $\tilde{f}$. 
\end{notation}

\section{Learning a Pauli channel}
In this section we describe the basic setup for learning a Pauli channel.  
Learning the Pauli error rates of a general channel will end up being just a minor extension, discussed in \Cref{sec:general-chan}.

As described in \Cref{eqn:PC}, an $n$-qubit Pauli channel is determined by a probability distribution~$p$ on $\{0,1,2,3\}^n$.
This probability distribution induces the mixed unitary channel in which $\sigma_C$ is applied with probability~$p(C)$.
An $n = 5$ example:
\[
    p(00321) = 2/10, \quad
    p(01300) = 3/10, \quad
    p(11323) = 2/6, \quad
    p(30000) = 1/6, \quad
    p(C) = 0 \text{ otherwise.}
\]
We anthropomorphize by imagining a character Charlie who operates the channel; on receiving a state~$\rho$, Charlie first (secretly) draws $\bC \sim p$, then outputs the state $\sigma_{\bC} \rho$.

Alice the Learner would like to estimate the probability distribution~$p$ via interactions with Charlie.
Alice has the ability to prepare $n$-qubit states, to ``query'' Charlie (i.e., pass an $n$-bit state through his channel), and to measure states that she receives back.
Her goal is to learn a precise approximation to~$p$ (with high probability), while minimizing the number of queries to Charlie.

\begin{definition}  \label{def:nontrivial-probe}
    We say that Alice performs a \emph{nontrivial probe} if she does the following:
    \begin{itemize}
        \item She chooses a string $A \in \{1,2,3\}^n$.
        \item She prepares the (unentangled) $n$-qubit state $\ket{\psi_A}$ in which the $j$th qubit is $\ket{\chi^{A_j}_{+}}$.
        \item She passes $\ket{\psi_A}$ through Charlie, obtaining $\sigma_{C} \ket{\psi_A}$ with probability~$p(C)$.
        \item She does a (non-entangled) measurement on the resulting $n$-qubit state, measuring the $j$th qubit in the basis $\ket{\chi^{A_j}_{\pm}}$.
        \end{itemize}
\end{definition}

Continuing our $n = 5$ example, if Alice does a nontrivial probe with the string $A = 31122$, this entails preparing and passing to Charlie the state
\[
    \ket{\psi_{31123}} = \ket{\chi^3_{+}}\ket{\chi^1_{+}}\ket{\chi^1_{+}}\ket{\chi^2_{+}}\ket{\chi^2_{+}} \quad \Bigl(= \ket{0} \ket{+} \ket{+} \ket{i} \ket{i}\Bigr),
\]
and then measuring the $5$ returned qubits in the bases $\ket{\chi^3_{\pm}}$, $\ket{\chi^1_{\pm}}$, $\ket{\chi^1_{\pm}}$, $\ket{\chi^2_{\pm}}$, $\ket{\chi^2_{\pm}}$, respectively.

Now suppose that Charlie drew $C = 00321$ (which occurs with probability $2/10$ in our example).  Then the state returned to Alice would be
\begin{align*}
    (\sigma_0 \otimes \sigma_0 \otimes \sigma_3 \otimes \sigma_2  \otimes \sigma_1) \ket{\psi_{31122}}
    &= (\sigma_0 \ket{\chi^3_+}) \otimes (\sigma_0 \ket{\chi^1_+}) \otimes (\sigma_3 \ket{\chi^1_+}) \otimes (\sigma_2 \ket{\chi^2_+}) \otimes(\sigma_1 \ket{\chi^2_+}) \\
    &= e^{i \theta} \cdot \ket{\chi^3_+}  \ket{\chi^1_+}  \ket{\chi^1_{-}}  \ket{\chi^2_+}  \ket{\chi^2_{-}}
\end{align*}
for some phase $e^{i \theta}$ ($\theta \in \R$) that we did not bother to compute.
Now when Alice measures in the bases $\ket{\chi^3_{\pm}}$, $\ket{\chi^1_{\pm}}$, $\ket{\chi^1_{\pm}}$, $\ket{\chi^2_{\pm}}$, $\ket{\chi^2_{\pm}}$, her readout will, with probability~$1$, be
\[
     \ket{\chi^3_+}  \ket{\chi^1_+}  \ket{\chi^1_{-}}  \ket{\chi^2_+}  \ket{\chi^2_{-}}.
\]
The subscripts ${+},{+},{-},{+},{-}$ here are the $5$ bits of information conveyed to Alice by the readout, and we may think of instead labeling them as $\zero \zero \one \zero \one$ in accordance with \Cref{not:commute-anticommute}.
With this relabeling convention, we obtain:
\begin{fact}    \label{fact:readout}
    Suppose Alice performs a nontrivial probe with string $A \in \{1,2,3\}^n$, and suppose the random string drawn by Charlie is $C \in \{0,1,2,3\}^n$.
    Then when Alice measures, she obtains the readout $R = A \star C \in \{\zero, \one\}^n$.
\end{fact}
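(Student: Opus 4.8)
The plan is to exploit the fully unentangled, product structure of a nontrivial probe and reduce the claim to a single-qubit computation carried out coordinate by coordinate. Since $\ket{\psi_A} = \bigotimes_{j=1}^n \ket{\chi^{A_j}_{+}}$ and Charlie applies the product operator $\sigma_C = \bigotimes_{j=1}^n \sigma_{C_j}$ (up to a global phase), the state returned to Alice factorizes as $\bigotimes_{j=1}^n \sigma_{C_j}\ket{\chi^{A_j}_{+}}$, and she measures qubit $j$ in the basis $\ket{\chi^{A_j}_{\pm}}$.

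First I would handle a single coordinate. Fix $j$ and write $a = A_j \in \{1,2,3\}$ and $c = C_j \in \{0,1,2,3\}$. Because $a \neq 0$, the operator $\sigma_a$ is a genuine (non-identity) Pauli, so $\{\ket{\chi^a_{+}}, \ket{\chi^a_{-}}\}$ is a bona fide orthonormal eigenbasis; this is precisely where the hypothesis $A \in \{1,2,3\}^n$ (an \emph{honest} nontrivial probe) is used, and it is what makes Alice's measurement well-defined. By \Cref{not:commute-anticommute}, $\sigma_c \ket{\chi^a_{+}} = e^{i\theta_j}\ket{\chi^a_{(-1)^{a \star c}}}$ for some phase $\theta_j \in \R$, with the subscript recording whether $\sigma_a$ and $\sigma_c$ commute (so $a \star c = \zero$) or anticommute (so $a \star c = \one$).

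Next I would reassemble the coordinates. The global post-channel state equals $e^{i\sum_{j} \theta_j}\bigotimes_{j=1}^n \ket{\chi^{A_j}_{(-1)^{A_j \star C_j}}}$, i.e., up to a single irrelevant overall phase it is exactly a product of basis vectors of the very basis in which Alice measures. Measuring an eigenstate in its own basis returns the corresponding outcome with probability $1$, and a global phase is invisible to measurement; hence, deterministically, qubit $j$ reads out as the $\pm$ symbol indexed by $(-1)^{A_j \star C_j}$. Under the relabeling convention of \Cref{not:commute-anticommute} identifying the subscript ${+}$ with $\zero$ and ${-}$ with $\one$, this readout bit is exactly $A_j \star C_j$, and collecting over $j = 1, \dots, n$ yields $R = A \star C$.

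I do not expect a genuine obstacle; the statement is essentially the definitional content of \Cref{not:commute-anticommute} propagated through a tensor product. The only points needing a little care are bookkeeping ones: that the local phases $\theta_j$ accumulate only into one global phase and so never affect any measurement statistics, and that the assumption $a \neq 0$ is truly necessary, since for $a = 0$ the operator $\sigma_0 = \Id$ has a degenerate spectrum and ``measuring in the $\ket{\chi^a_{\pm}}$ basis'' is not even meaningful.
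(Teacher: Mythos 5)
Your proposal is correct and follows essentially the same route as the paper, which justifies the fact via a worked $n=5$ example: the product structure of the probe state reduces everything to the single-qubit observation of \Cref{not:commute-anticommute}, the accumulated phases collapse into one irrelevant global phase, and the $\pm \mapsto \zero/\one$ relabeling gives $R = A \star C$. Your general coordinate-by-coordinate write-up, including the remark on why $A_j \neq 0$ is needed for the measurement basis to be well-defined, is a clean formalization of exactly that argument.
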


\begin{remark}
    So far we have pictured Alice as first choosing~$A$, and then Charlie as drawing a random~$C$.
    It is useful now to make a slight shift in perspective: for each interaction between Alice and Charlie, we will equivalently think of \emph{Charlie} as first (secretly) drawing~$C$, and then Alice gaining some partial information about this~$C$ by ``probing'' it using an~$A$ of her choice.
    We emphasize that Alice must make her choice of~$A$ without knowing the channel outcome~$C$.
\end{remark}

We now describe a trick that Alice may employ in probing the channel:

\begin{definition}
    For a channel distribution $p$ on $\{0,1,2,3\}^n$, and any fixed $B \in \{0,1,2,3\}^n$, define the \emph{$B$-altered channel distribution} $p^{\oplus B}$ on $\{0,1,2,3\}^n$ via $p^{\oplus B}(C) = p(B \oplus C)$.
\end{definition}

For any string $B \in \{0,1,2,3\}^n$ of her choosing, Alice can effectively simulate access to the $B$-altered channel:
If she wishes to simulate passing $\ket{\phi}$ through the $B$-altered channel, she could instead simply pass $\sigma_B \ket{\phi}$ through Charlie's actual channel.
(This may introduce a ``wrong'' global phase, but it doesn't matter for any measurement behavior that we consider here.)
But in fact, something even simpler is true:

\begin{observation} \label{obs:alter-trick}
    Given $B \in \{0,1,2,3\}^n$, if Alice wants to perform a nontrivial probe of $p^{\oplus B}$ based on string~$A$, she can pass $\ket{\psi_A}$ to Charlie as always.
    Then, when she measures and obtains $A \star C$, she can ``reinterpret'' this readout by adding in, mod~$2$, the string $A \star B \in \{\zero, \one\}^n$ (which she knows).
    Recalling \Cref{fact:dot-product}, this gives her $(A \star B) +_2 (A \star C) = A \star (B \oplus C)$.
    Thus the reinterpreted readout is indeed distributed as what she would get by probing $p^{\oplus B}$ with~$A$.
\end{observation}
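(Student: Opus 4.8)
The plan is to reduce the claim directly to \Cref{fact:readout} together with the linearity of the symplectic product recorded in \Cref{fact:dot-product}. First I would pin down the target distribution: by \Cref{fact:readout} applied to a (hypothetical) channel with distribution $p^{\oplus B}$, a nontrivial probe with string $A$ ought to return a readout distributed as $A \star \bC'$, where $\bC' \sim p^{\oplus B}$. So it suffices to check that the string Alice actually ends up holding --- her raw readout with $A \star B$ added in mod~$2$ --- has exactly this law.

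Next I would rewrite $p^{\oplus B}$ as a shift of $p$. Under the identification of \Cref{fact:dot-product}, $\oplus$ is coordinatewise addition in $\F_2^2$ and hence an involution, so the map $C \mapsto B \oplus C$ is a bijection of $\{0,1,2,3\}^n$; therefore, if $\bC \sim p$ then $\Pr[B \oplus \bC = C] = \Pr[\bC = B \oplus C] = p(B \oplus C) = p^{\oplus B}(C)$. Consequently the target readout distribution is precisely that of $A \star (B \oplus \bC)$ with $\bC \sim p$, and by the coordinatewise form of \Cref{fact:dot-product} this is the distribution of $(A \star B) +_2 (A \star \bC)$.

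Finally I would match this against Alice's actual procedure. When she passes $\ket{\psi_A}$ through Charlie's true channel, Charlie draws $\bC \sim p$, and \Cref{fact:readout} says her raw readout is $A \star \bC$; after she adds the known string $A \star B$ modulo~$2$ she holds $(A \star B) +_2 (A \star \bC)$, which by the previous paragraph is distributed exactly as a genuine probe of $p^{\oplus B}$ with~$A$. The possible ``wrong'' global phase one would incur by instead picturing $\sigma_B$ as acting on the input state is irrelevant here, since it affects no measurement statistic.

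There is essentially no obstacle --- the statement is a bookkeeping identity --- but two points deserve a moment's attention. First, one really needs the coordinatewise extension of \Cref{fact:dot-product}, not merely its single-coordinate version, in order to move $A \star (\cdot)$ across the $\oplus$. Second, the involutivity of $\oplus$ is what makes $\bC \mapsto B \oplus \bC$ a measure-preserving relabeling, so that $B \oplus \bC$ with $\bC \sim p$ is genuinely $p^{\oplus B}$-distributed rather than some other reindexing of~$p$.
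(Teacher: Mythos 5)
Your proof is correct and follows essentially the same route as the paper's inline justification: apply \Cref{fact:readout} to the raw readout, shift by $A \star B$ modulo~$2$, and invoke the coordinatewise linearity of $\star$ from \Cref{fact:dot-product}. The only addition is your explicit check that $B \oplus \bC$ with $\bC \sim p$ has law $p^{\oplus B}$, which the paper leaves implicit in its definition of the $B$-altered channel; this is a fine (and harmless) bit of extra bookkeeping.
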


A natural strategy for Alice is to make \emph{random} nontrivial probes.
It is easy to see the following:
\begin{fact}                                            \label{fact:z-channel}
    Fix a draw $C \in \{0,1,2,3\}^n$ for Charlie.
    Now if Alice performs a nontrivial probe with a uniformly random $\bA \in \{1,2,3\}^n$, then the coordinates of her readout $\bR = \bA \star C \in \{\zero,\one\}^n$ will be independent, with the following distribution for each $1 \leq j \leq n$:
    \begin{itemize}
        \item If $C_j = 0$ then $\bR_j$ will be $\zero$ with probability~$1$.
        \item If $C_j \neq 0$ then $\bR_j$ will be $\zero$ with probability $\frac13$ and $\one$ with probability $\frac23$.
    \end{itemize}
\end{fact}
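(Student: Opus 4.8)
The plan is to reduce everything to the single-coordinate claim and then invoke independence coordinate-by-coordinate. First I would fix Charlie's draw $C$ and observe that by \Cref{fact:readout}, the readout is $\bR = \bA \star C$ where $\bA \in \{1,2,3\}^n$ is uniform; since the coordinates $\bA_1,\dots,\bA_n$ are mutually independent and $(\bA\star C)_j = \bA_j \star C_j$ depends only on $\bA_j$, the coordinates $\bR_j$ are mutually independent. So it suffices to compute, for each fixed $j$, the distribution of $\bA_j \star C_j$ where $\bA_j$ is uniform on $\{1,2,3\}$.

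Next I would split into the two cases on $C_j$. If $C_j = 0$, then $\sigma_0$ commutes with everything, so $a \star 0 = \zero$ for all $a$ (directly from \Cref{not:commute-anticommute}, since $|\{a,0,a\}| < 3$), and hence $\bR_j = \zero$ with probability $1$. If $C_j \neq 0$, say $C_j = c \in \{1,2,3\}$, I would note that among the three choices $a \in \{1,2,3\}$, exactly one equals $c$ (giving $a \star c = \zero$, since $\sigma_c$ commutes with itself), and the other two are distinct from $c$ and from each other, so for those two we have $|\{a,c,a\oplus c\}| = 3$ and hence $a \star c = \one$. (Equivalently, via \Cref{fact:dot-product}: identifying $\{1,2,3\}$ with the three nonzero vectors of $\F_2^2$, the symplectic form $a \star c$ vanishes iff $a \in \{0,c\}$, so among the two nonzero choices other than $c$ it is nonzero.) Therefore $\bR_j = \zero$ with probability $\tfrac13$ and $\bR_j = \one$ with probability $\tfrac23$, as claimed.

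I don't anticipate a serious obstacle here — the statement really is a one-line consequence of the commute/anticommute combinatorics of the Pauli group together with independence of the uniform coordinates of $\bA$. The only thing to be careful about is the logical structure: the randomness is entirely in $\bA$ (we have conditioned on a fixed $C$), so "independent coordinates of $\bR$" means independence over the choice of $\bA$, and this is immediate because $\bR_j$ is a deterministic function of $\bA_j$ alone. I would also remark in passing that this is exactly the promised connection to the binary $Z$-channel with crossover probability $\tfrac13$: writing $D_j = \zero$ if $C_j = 0$ and $D_j = \one$ if $C_j \neq 0$, the vector $\bR$ is distributed precisely as the output of the $Z$-channel (with crossover $\tfrac13$) applied coordinatewise to $D$, which sets up the reduction used in the rest of the paper.
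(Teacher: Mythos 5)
Your proof is correct and is precisely the argument the paper intends: the paper states \Cref{fact:z-channel} without proof (``It is easy to see\ldots''), and your coordinate-wise reduction --- independence because $\bR_j$ is a function of $\bA_j$ alone, then the commute/anticommute count showing $a \star c = \zero$ for exactly one of the three choices $a \in \{1,2,3\}$ when $c \neq 0$, and $a \star 0 = \zero$ always --- is exactly the intended one-line verification, including the closing remark identifying the result with the $Z$-channel of crossover $\tfrac13$ as in \Cref{fact:z-channel2}.
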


We can state this more succinctly by introducing some additional terminology:
\begin{notation}
    For $B,C \in \{0,1,2,3\}^n$, define the string $C^{\neq B} \in \{\zero, \one\}^n$ by
    \[
        (C^{\neq B})_j = \begin{cases}
                                       \one & \text{if $C_j \neq B_j$,} \\
                                       \zero & \text{if $C_j = B_j$}.
                                    \end{cases}
    \]
\end{notation}
\begin{definition}
    Recall from information theory the so-called \emph{$Z$-channel with crossover probability~$r$}: it is the binary channel that leaves $\zero$~untouched and flips $\one$ to $\zero$ with probability~$r$.
\end{definition}

Now \Cref{fact:z-channel} can be restated as follows:
\begin{fact}                                            \label{fact:z-channel2}
    Fix a draw $C \in \{0,1,2,3\}^n$ for Charlie.
    Now if Alice performs a random nontrivial probe, her readout is the result of passing $C^{\neq 0^n}$ through a $Z$-channel with crossover probability~$\frac13$.
\end{fact}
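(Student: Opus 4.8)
The plan is to recognize that this Fact carries no new content beyond \Cref{fact:z-channel}: it is the same per-coordinate distribution, merely repackaged using the $C^{\neq 0^n}$ notation and the language of the $Z$-channel. So the ``proof'' is just a matter of checking that two coordinate-wise descriptions of the same distribution agree.

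First I would unpack the right-hand side. By definition of $C^{\neq 0^n} \in \{\zero, \one\}^n$, its $j$th coordinate equals $\one$ exactly when $C_j \neq 0$ and equals $\zero$ exactly when $C_j = 0$. Applying a $Z$-channel with crossover probability~$\tfrac13$ to this string acts independently on each coordinate: an input $\zero$ is emitted as $\zero$ with probability~$1$, while an input $\one$ is emitted as $\zero$ with probability~$\tfrac13$ and as $\one$ with probability~$\tfrac23$.

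Next I would compare this, coordinate by coordinate, with the readout distribution in \Cref{fact:z-channel}. When $C_j = 0$ the input bit is $\zero$, the $Z$-channel leaves it alone, and \Cref{fact:z-channel} likewise gives $\bR_j = \zero$ surely. When $C_j \neq 0$ the input bit is $\one$, the $Z$-channel outputs $\zero$ with probability~$\tfrac13$ and $\one$ with probability~$\tfrac23$, exactly matching \Cref{fact:z-channel}. Since \Cref{fact:z-channel} also states that the coordinates of $\bR = \bA \star C$ are mutually independent --- which is precisely the behaviour of a memoryless channel applied coordinate-wise --- the two distributions coincide, and the Fact follows.

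I do not expect a genuine obstacle here; the only point requiring care is the bookkeeping that attaches the ``$\one$ side'' of the $Z$-channel to the event $C_j \neq 0$ rather than the reverse, and that orientation is already fixed by \Cref{fact:z-channel}. The one substantive computation --- that a uniformly random $\bA_j \in \{1,2,3\}$ satisfies $\bA_j \star C_j = \zero$ with probability $\tfrac13$ when $C_j \neq 0$ (because among $\sigma_1,\sigma_2,\sigma_3$ exactly one commutes with $\sigma_{C_j}$) --- was already discharged in establishing \Cref{fact:z-channel}, so here I would simply cite it.
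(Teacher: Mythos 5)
Your proposal is correct and matches the paper exactly: the paper offers no separate argument for \Cref{fact:z-channel2}, presenting it as a direct restatement of \Cref{fact:z-channel} once the $C^{\neq 0^n}$ notation and the $Z$-channel definition are unpacked, which is precisely the coordinate-by-coordinate check you carry out.
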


\begin{observation}             \label{obs:altered-z}
    By combining \Cref{obs:alter-trick} with \Cref{fact:z-channel2}, we obtain the following:
    Fix a draw $C \in \{0,1,2,3\}^n$ for Charlie and suppose Alice performs a random nontrivial probe.
    She can then --- for any fixed $B \in \{0,1,2,3\}^n$ --- interpret her readout as $C^{\neq B}$ passed through a $Z$-channel with crossover probability~$\frac13$.
    \emph{Warning:} these reinterpretations are completely \emph{dependent}; she of course cannot get the result of \emph{independent} channel applications for various~$B$'s, unless she makes multiple probes.
\end{observation}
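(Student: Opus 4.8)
The plan is to derive \Cref{obs:altered-z} by composing the two advertised ingredients, treating each as a black box. First I would fix Charlie's draw $C$ and condition on it throughout, so that the only remaining randomness is Alice's uniformly random probe string $\bA \in \{1,2,3\}^n$. By \Cref{fact:readout}, Alice's raw readout is the (deterministic, given $\bA$ and $C$) string $\bA \star C$. By \Cref{obs:alter-trick}, for any fixed $B \in \{0,1,2,3\}^n$ the reinterpreted readout $\bR^{(B)} \coloneqq (\bA \star B) +_2 (\bA \star C)$ equals $\bA \star (B \oplus C)$ — here I would cite the linearity $a \star (b \oplus c) = (a\star b) + (a \star c) \bmod 2$ from \Cref{fact:dot-product} — and is distributed exactly as the readout Alice would obtain by honestly probing the channel on the fixed draw $B \oplus C$ with a uniform $\bA$.

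Next I would invoke \Cref{fact:z-channel2}, but applied with the draw ``$B \oplus C$'' substituted for the generic ``$C$'' appearing in its statement. Since that fact holds for \emph{every} fixed draw, it tells us that $\bA \star (B \oplus C)$ is distributed as the result of passing the string $(B \oplus C)^{\neq 0^n}$ through a coordinatewise $Z$-channel with crossover probability $\frac{1}{3}$. The only remaining bookkeeping is the identification $(B \oplus C)^{\neq 0^n} = C^{\neq B}$, which I would check coordinate by coordinate: $(B \oplus C)_j = 0$ iff $B_j = C_j$, because $\oplus$ is the group operation on $\{0,1,2,3\} \cong \F_2^2$ and $0$ is its identity; this is precisely the condition defining $(C^{\neq B})_j = \zero$. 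Substituting this equality into the previous step yields the claimed conclusion: for each fixed $B$, Alice's reinterpreted readout is $C^{\neq B}$ passed through a $Z$-channel with crossover $\frac{1}{3}$.

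For the \emph{Warning}, I would observe that every $\bR^{(B)}$ produced from a single probe is a deterministic function of the one shared pair $(\bA, C)$ — indeed $\bR^{(B)} = \bA \star (B \oplus C)$ — so the family $\{\bR^{(B)}\}_B$ cannot behave like independent channel outputs; for instance, once enough of the values $\bR^{(B)} = \bA \star (B \oplus C)$ are known they pin down $\bA \star (\cdot)$ as a linear functional and hence determine all the other reinterpretations. Genuine independence across distinct $B$'s is recovered only by performing separate probes, each with a fresh uniform $\bA$ (and a fresh draw of Charlie's $C$).

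I do not anticipate any real obstacle here; the one point deserving a moment's care is the substitution of $B \oplus C$ for $C$ into \Cref{fact:z-channel2} — i.e., explicitly noting that that fact is quantified over \emph{all} draws, not merely over Charlie's actual one — together with the elementary coordinatewise identity $(B \oplus C)^{\neq 0^n} = C^{\neq B}$.
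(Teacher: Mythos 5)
Your proposal is correct and follows essentially the same route as the paper, which justifies this observation in one line by combining \Cref{obs:alter-trick} with \Cref{fact:z-channel2}; your write-up simply unpacks that combination, including the (correct) coordinatewise identity $(B \oplus C)^{\neq 0^n} = C^{\neq B}$ and the dependence caveat. No gaps.
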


\section{Population Recovery}
\label{sec:poprecov}

With \Cref{obs:altered-z} in hand, we have effectively reduced the problem of learning a Pauli channel to a ``Population Recovery''-type problem (with a quantum-free definition).
To recap:  there is an unknown probability distribution $p$ on $\{0,1,2,3\}^n$, a learner may request samples, and when a sample $C$ is drawn from~$p$, the learner receives a binary string which can be interpreted as ``$C^{\neq B}$~passed through a $Z$-channel with crossover~$\frac13$'' for any $B \in \{0,1,2,3\}^n$ of the learner's choosing.

In this section we will give a solution to this problem that has optimal sample complexity (except possibly up
to a logarithmic factor) using techniques from the field of Population Recovery.
Our solution will immediately imply \Cref{thm:intro-pop} in the special case where the channel to be learned is indeed a Pauli channel.
The case of learning a \emph{general} channel's Pauli error rates is treated in \Cref{sec:general-chan}.
We remark that our Pauli channel algorithm only uses nontrivial probes, and thus only involves preparing the states $\ket{0}$, $\ket{+}$, and~$\ket{i}$.
The other three states $\ket{1}$, $\ket{-}$, and~$\ket{-i}$ are only used for the extension to general channels.

\paragraph{Idea of our solution.}
Using known techniques from Population Recovery, one can first reduce to the simpler task of ``Individual Recovery'' (estimating a single $p(B)$ value) via a coordinate-by-coordinate learning algorithm.  Then one can further reduce to just recovering~$p(0^n)$, using the altered-channel trick.
As for learning $p(0^n)$, we first observe that the replacement of $C$ by $C^{\neq 0^n}$ changes nothing for this problem, so we effectively have the same task just for the $\frac13$-crossover $Z$-channel on binary strings.
This is similar to the erasure channel with erasure probability $\frac13$, and in fact the known solutions for erasure probability-$r$~\cite{DRWY12,MS13,DOS20,PSW17} \emph{only use the locations of the $\one$'s in the received word}.
Thus these known solutions work \emph{equally well} for the $Z$-channel.
Indeed, as noted in~\cite{DRWY12}, the solution is particularly simple when $r \leq \frac12$ (as it is for us); the full method of ``robust local inverses'' is not needed, and one can use the ``natural inverse'' (as we implicitly do in the proof of \Cref{thm:individ} below).

\subsection{Individual Recovery}    \label{sec:individ}

Although the proof of the below theorem is self-contained, we remark that it implicitly follows the Individual Recovery routine of~\cite{DRWY12} for the $\frac13$-erasure channel.
\begin{theorem}                                     \label{thm:individ}
    For any fixed $B \in \{0,1,2,3\}^n$, a version of \Cref{thm:intro-pop} holds in which the learner only computes an estimate $\widehat{p}(B)$ of $p(B)$ satisfying $|\widehat{p}(B) - p(B)| \leq \eps_0$ except with probability at most~$\delta_0$.
    The number of samples used is $m = O(1/\eps_0^2) \cdot \log(1/\delta_0)$ and the classical post-processing time is $O(mn)$.
\end{theorem}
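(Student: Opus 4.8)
The plan is to use the altered-channel trick to reduce the task to constructing a single bounded, \emph{unbiased} estimator of $p(B)$ from one random nontrivial probe, and then to amplify by averaging over independent probes.

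First I would fix $B$ and recall from \Cref{obs:altered-z} that if Alice makes a random nontrivial probe while Charlie's secret draw is $C$, then --- after she reinterprets her readout by adding in $A \star B$ mod~$2$ as in \Cref{obs:alter-trick} --- the resulting string $\bR \in \{\zero,\one\}^n$ is distributed exactly as $C^{\neq B}$ sent coordinatewise through a $Z$-channel with crossover~$\tfrac{1}{3}$. Concretely, given $C$ the coordinates $\bR_j$ are independent, with $\bR_j = \zero$ surely when $C_j = B_j$, and $\bR_j = \zero$ with probability $\tfrac{1}{3}$ (else $\one$) when $C_j \neq B_j$.

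Next I would introduce the ``natural inverse'' estimator $f(R) = \left(-\tfrac{1}{2}\right)^{|R|}$, where $|R|$ is the Hamming weight of $R$; equivalently $f(R) = \prod_{j=1}^n g(R_j)$ with $g(\zero) = 1$ and $g(\one) = -\tfrac{1}{2}$. The value $-\tfrac{1}{2}$ is forced by the requirement $\tfrac{1}{3} g(\zero) + \tfrac{2}{3} g(\one) = 0$, which makes $\E[g(\bR_j) \mid C_j \neq B_j] = 0$ while trivially $\E[g(\bR_j) \mid C_j = B_j] = 1$. By conditional independence of the $\bR_j$ we get $\E[f(\bR) \mid C] = \prod_j \E[g(\bR_j)\mid C_j] = \bbone[C = B]$, so averaging over $\bC \sim p$ yields $\E[f(\bR)] = p(B)$ exactly.

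The last ingredient --- and essentially the only place anything could go wrong --- is boundedness of $f$: since the crossover probability $\tfrac{1}{3}$ is at most $\tfrac{1}{2}$, we have $|g(\one)| = \tfrac{1}{2} \le 1$ and hence $f(R) \in [-\tfrac{1}{2}, 1]$ for every $R$. (This is exactly why the naive local inverse suffices and the ``robust local inverse'' machinery is unneeded; for crossover exceeding $\tfrac{1}{2}$ one would have $|g(\one)| > 1$ and the variance would blow up.) Given this, Alice makes $m$ independent random nontrivial probes, reinterprets each readout with~$B$, computes $f$ of each, and outputs the empirical average $\widehat p(B)$; a Hoeffding bound for $m$ i.i.d.\ variables lying in an interval of length $\tfrac{3}{2}$ gives $\Pr[\,|\widehat p(B) - p(B)| > \eps_0\,] \le 2\exp(-\tfrac{8}{9} m \eps_0^2)$, so $m = O(1/\eps_0^2)\cdot \log(1/\delta_0)$ suffices. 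For the runtime, each probe produces $n$ readout bits, and computing $A \star B$, adding it in, taking the Hamming weight, and forming $\left(-\tfrac{1}{2}\right)^{|R|}$ all cost $O(n)$ per sample, for $O(mn)$ total. I expect no real obstacle beyond recognizing the right estimator; the remainder is a textbook concentration argument.
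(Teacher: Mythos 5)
Your proposal is correct and follows essentially the same route as the paper's proof: the same estimator $(-1/2)^{|(\bA\star B)+_2\bR|}$, the same conditional-independence computation showing it is an unbiased indicator of $\{\bC=B\}$, and the same (implicit in the paper) Hoeffding argument for the sample complexity. The only difference is that you spell out the boundedness and concentration details the paper leaves to the reader.
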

\begin{remark}
    The reader may wish to verify the proof just in the case~$B = 0^n$, where it is simpler; the general case then     follows from \Cref{obs:alter-trick}.
\end{remark}
\begin{proof}
    Alice obtains $m$ probe/readout pairs $(\bA, \bR)$, with $\bA \sim \{1,2,3\}^n$ uniformly random and $\bR = \bA \star \bC$, where $\bC$ is a random channel outcome drawn from $p$. 
    The estimate $\widehat{p}(B)$ that Alice will output is the empirical mean of the random variable
    \[
        \bH = (-1/2)^{|\bA \star B +_2 \bR|} = (-1/2)^{\sum_{t} ((\bA \star B) +_2 \bR)_t} = \prod_{t=1}^n (-1/2)^{\by_t}, \quad \by_t \coloneqq  (\bA_t \star B_t) +_2 \bR_t.
    \]
    As seen in \Cref{obs:altered-z}, for a given outcome $\bC = C$, the random binary string $(\bA \star B) +_2 \bR$ is distributed as $C^{\neq B}$ passed through a $Z$-channel with crossover probability~$\frac13$.
    In particular, its coordinates $\by_t$ are independent random variables, with conditional expectation given by
    \[
        \E[(-1/2)^{\by_t} \mid \bC = C] = \begin{cases}
                                                                            (-1/2)^0 = 1 & \text{if $C_t = B_t$,} \\
                                                                            \tfrac13 (-1/2)^0 + \tfrac23 (-1/2)^1 = 0 & \text{if $C_t \neq B_t$}.
                                                                        \end{cases}
    \]
    Thus
    \[
        \E[\bH \mid \bC = C] = \prod_{t=1}^n \E\bigl[(-1/2)^{\by_t} \mid \bC = C\bigr] = \begin{cases}
            1 & \text{if $C = B$,} \\
            0 & \text{if $C \neq B$,}
        \end{cases}
    \]
    and hence indeed $\E[\bH] = p(B)$.
\end{proof}

\subsection{Population Recovery}            \label{sec:population}
\Cref{thm:individ} allows Alice to estimate $p(B)$ for any particular string~$B \in \{0,1,2,3\}^n$.
But also, for any shorter string $\beta \in \{0,1,2,3\}^\ell$, Alice can estimate the marginal
\[
    p(\beta) \coloneqq \sum_{\gamma \in \{0,1,2,3\}^{n-\ell}} p(\beta\gamma) = \Pr_{C \sim p}[(C_1, \dots, C_\ell) = \beta],
\]
simply by ignoring all data in positions $\ell+1, \dots, n$. 
(She is obviously not limited to marginalizing contiguous blocks, but this is all we will need for our purposes.)
Alice can thus learn all of~$p$ to good $\ell_\infty$-precision with the straightforward, coordinate-by-coordinate branch-and-prune approach common in Population Recovery (see, e.g.,~\cite[App.~A]{PSW17}).
We repeat this approach here; the following algorithm achieves our main \Cref{thm:intro-pop} for Pauli channels, except for the claim about the running time of the post-processing algorithm:

\begin{enumerate}
    \item Set $\eps_0 = \frac{\eps}{4}$, $\delta_0 = \frac{4\eps \delta}{9n}$ and draw a single batch of~$m$ samples, where $m$ is as in \Cref{thm:individ}.

    \item Define ``support sets'' $\Omega_1 = \{0, 1, 2, 3\}$ and $\Omega_2 = \cdots = \Omega_n = \emptyset$.

    \item For round $j = 1 \dots n-1$:

    \item \quad For each prefix $\beta' \in \Omega_{j}$ and each $b \in \{0, 1, 2, 3\}$:

    \item \quad \quad Run the Individual Recovery algorithm on $\beta \coloneqq \beta'b$ to estimate the marginal $p(\beta)$.

    \item \quad \quad If the estimate is at least $2\eps_0 = \frac{\eps}{2}$, then place $\beta$ into $\Omega_{j+1}$.

    \item Output as $\widehat{p}$ the collection of strings in $\Omega_n$, together with their estimated probabilities.
\end{enumerate}

The correctness of the algorithm, that $\|\widehat{p} - p\|_\infty \leq \eps$  with failure probability at most~$\delta$, is straightforward and is explicitly proven in~\cite[Lem.~18]{PSW17}.
The proof also establishes that when there is no failure, $|\Omega_j| \leq \frac{4}{\eps}$ holds for all $1 \leq j \leq n$.
Thus for running time purposes (and without impacting the correctness claim) we may have the algorithm abort if ever some~$\Omega_j$ gets cardinality more than~$\frac{4}{\eps}$.
It only remains to obtain the post-processing running time of $O(mn/\eps)$ claimed in \Cref{thm:intro-pop}.

\paragraph{Running time analysis.}
As it stands, the running time of the above algorithm is $O(mn^2/\eps)$, since it may do up to $O(n/\eps)$ executions of the $O(mn)$-time Individual Recovery algorithm. 
(We have implemented this na\"{\i}ve version of the algorithm in Julia~\cite{F21} for interested readers.) 
However, since all executions of the Individual Recovery algorithm are on the same batch of samples, it's not hard to see that information from the $j$th round of the algorithm can be used to speed up the $(j+1)$st round.
More precisely, we show that each round can be done in $O(m/\eps)$ time, leading to the overall claimed running time of $O(mn/\eps)$.

Let $R \in \{\zero, \one\}^{m \times n}$ be the measurement outcome bits that the algorithm processes, and let $R_{1 \dots j}$ denote the submatrix formed by the first~$j$ columns.
Also, for $\beta \in \{0,1,2,3\}^j$, let $R^{(\beta)} \in \{\zero, \one\}^{m \times j}$ be the (hypothetical) matrix whose $t$th row is the same as $R_{1 \dots j}$'s but with $(A^t_1, \dots, A^t_j) \star \beta$ added in mod~$2$, where $A^t$ is the $t$th probe string used by Alice.
Given $\beta$, the algorithm can look up entries of $R^{(\beta)}$ in~$O(1)$ time.

Recall that when the algorithm does Individual Recovery on the prefix~$\beta$, it computes the fraction of rows of $R^{(\beta)}$ that have Hamming weight~$i$, multiplies this number by~$(-1/2)^i$, and sums the results.
In particular, this estimate can be computed in~$O(m)$ time given the vector $h^{(\beta)} \in \N^m$ whose $t$th entry is the Hamming weight of the $t$th row of $R^{(\beta)}$ --- just add up $(-1/2)^{h^{(\beta)}_t}/m$ across all~$t$.

We can now modify the above Population Recovery algorithm so that whenever a prefix $\beta \in \{0,1,2,3\}^j$ is added into~$\Omega_j$, the algorithm retains the vector~$h^{(\beta)}$ that went into estimating~$p(\beta)$.
It is easy to see that in the subsequent round, we can compute each of $h^{(\beta0)}, h^{(\beta1)}, h^{(\beta2)}, h^{(\beta3)}$ from $h^{(\beta)}$ (and hence the marginal estimates) in~$O(m)$ time, and retain them as needed.
Thus indeed each round only requires $O(m/\eps)$ time, since at most $\frac{4}{\eps}$ prefixes are processed in each round.

\section{Multiplicative error}  \label{sec:mult}
In a practical scenario we would would hope that the ``nontrivial error rate'' of the Pauli channel,
\[
    \eta \coloneqq 1 - p(0^n)
\]
is very small.
This motivates writing $p$ as a mixture distribution, as follows:
\begin{equation}    \label{eqn:mix}
    p:  \quad \text{mixing weight $1-\eta$ on } 0^n, \quad \text{mixing weight $\eta$ on } \perr,
\end{equation}
where $\perr$ is a distribution on $\{0,1,2,3\}^n \setminus \{0^n\}$.
Now a natural goal is to learn with \emph{multiplicative error}~$\eps$, meaning producing estimates~$\widehat{\eta}$,~$\estperr$ with
\[
    (1-\eps)\eta \leq \widehat{\eta} \leq (1+\eps)\eta, \qquad \|\estperr - \perr\|_\infty \leq \eps.
\]
As described in \Cref{sec:intro}, the ideal sample complexity to strive for now is~$O(\frac{1}{\eps^2 \eta})$.

\paragraph{Adaptivity, and a floor on~$\eta$.}
Let us make two more technical remarks.
First, if $\eta$ is extraordinarily small (or even~$0$), we won't want to make $1/\eta$ measurements.
Thus we assume the algorithm is given a floor~$\eta_0$, and when $\eta \leq \eta_0$ we are satisfied just to certify that this is the case.
Second, we cannot hope to have (as before) a completely nonadaptive algorithm achieving sample complexity on the order of~$1/(\eps^2 \eta)$ because the algorithm does not know~$\eta$, or even an approximation to~$\eta$, in advance. 
Thus our algorithm will first need to find a preliminary constant-factor approximation $\etaest$ to~$\eta$ in an online probe-and-measure fashion; then it can proceed nonadaptively.

\subsection{Roughly estimating the error rate}
Here we describe the (mildly) ``adaptive'' algorithm that handles the error floor and obtains~$\etaest$, a factor-$5$ approximation of~$\eta$ before subsequently finding a good approximation to all the error rates (including $p(0^n) = 1-\eta$).

\begin{lemma} \label{lem:approx-eta}
    There is a randomized learning algorithm that, given input $0 < \delta_0, \eta_0 < 1$, as well as access to an $n$-qubit Pauli channel defined by distribution~$p$ with nontrivial error rate $\eta = 1 - p(0^n)$:
    \begin{itemize}
        \item repeatedly prepares a state, passes it through the Pauli channel, and measures, as in \Cref{thm:intro-pop};
        \item halts after some number of repetitions (\emph{always} at most $O(1/\eta_0)\cdot \log(1/\delta_0)$) and outputs either: ``$\eta \leq \eta_0$'' or else an estimate $\etaest$ that is within a factor of~$5$ of~$\eta$;
        \item runs in classical time that is linear in the number of measurement readouts.
    \end{itemize}
    Except with probability at most~$\delta_0$, the algorithm's output is correct and it halts after at most $O(1/\eta) \cdot \log(1/\delta_0)$ repetitions.
\end{lemma}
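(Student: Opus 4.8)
The plan is for Alice to make \emph{random} nontrivial probes of the channel (\Cref{def:nontrivial-probe} with uniformly random $\bA \in \{1,2,3\}^n$), one at a time, recording for the $i$th probe only the single bit $\bS_i$ that indicates whether the readout $\bR^{(i)}$ equals the all-zeros string $0^n$ or not. By \Cref{fact:z-channel2}, for a fixed channel outcome $C$ the readout is $0^n$ exactly when the $\tfrac13$-crossover $Z$-channel flips every nonzero coordinate of $C$; hence $\Pr[\bR^{(i)} = 0^n] = p(0^n) + \sum_{C \neq 0^n} p(C)\, 3^{-|C^{\neq 0^n}|}$, which lies between $p(0^n) = 1-\eta$ and $1 - \tfrac23 \eta$. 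Since distinct probes use independent randomness, the $\bS_i$ are i.i.d.\ Bernoulli$(q)$ with $q \coloneqq \Pr[\bR^{(i)} \neq 0^n] \in [\tfrac23\eta,\, \eta]$. In particular $q$ is itself a factor-$\tfrac32$ proxy for $\eta$, so it suffices to pin $q$ down to a small constant factor, or else to certify that it is tiny.

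To do this, set $K \coloneqq \lceil c \ln(3/\delta_0) \rceil$ and $M \coloneqq \lceil c' \ln(3/\delta_0)/\eta_0 \rceil$ for absolute constants $c$ and $c'$, with $c'$ a large enough multiple of $c$ (fixed below). Alice probes until she has accumulated $K$ ``successes'' (readouts $\neq 0^n$) or has made $M$ probes, whichever comes first; call the stopping time $\bt$. She thus never uses more than $M = O(\log(1/\delta_0)/\eta_0)$ repetitions, as required. If she stopped upon the $K$th success she outputs $\etaest \coloneqq K/\bt$; if she instead timed out she outputs ``$\eta \leq \eta_0$''. Deciding whether each $n$-bit readout is all-zeros costs $O(n)$ time, so the classical running time is $O(\bt n)$, linear in the measurement data.

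Correctness rests on two standard tail bounds. First, if $c$ is a large enough absolute constant, the \emph{untruncated} time $\bt_\infty$ of the $K$th success in an i.i.d.\ Bernoulli$(q)$ stream satisfies $\tfrac23 K/q \leq \bt_\infty \leq \tfrac43 K/q$ except with probability $\leq \delta_0/3$; this follows from two Chernoff bounds on the number of successes in windows of length $\tfrac23 K/q$ and $\tfrac43 K/q$, and it holds for \emph{every} value of $q$. On the event ``Alice stops at the $K$th success'' we have $\bt = \bt_\infty < M$, so on its intersection with the above event $\etaest = K/\bt \in [\tfrac34 q,\, \tfrac32 q] \subseteq [\tfrac12\eta,\, \tfrac32\eta]$ --- comfortably within a factor $5$ of $\eta$ --- and moreover $\bt \leq \tfrac43 K/q \leq 2K/\eta = O(\log(1/\delta_0)/\eta)$. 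Second, the answer ``$\eta \leq \eta_0$'' is wrong only if $\eta > \eta_0$; but then $q \geq \tfrac23 \eta_0$, so the expected number of successes among the first $M$ probes is $Mq \geq \tfrac23 c' \ln(3/\delta_0)$, and choosing $c'$ a large enough multiple of $c$ (and $c' \geq 12$) makes this at least $2K$, whence a Chernoff bound gives that Alice times out with probability $\leq \delta_0/3$. Finally, when $\eta \leq \eta_0$ the repetition bound is automatic, since $\bt \leq M = O(\log(1/\delta_0)/\eta_0) \leq O(\log(1/\delta_0)/\eta)$. A union bound over these at most two bad events shows that, except with probability $\delta_0$, Alice's output is correct and she halts within $O(\log(1/\delta_0)/\eta)$ repetitions.

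The only slightly delicate point --- and the step I would be most careful about --- is the interaction between the $M$-truncation of the stopping time and the negative-binomial concentration: one must state the latter for the untruncated variable $\bt_\infty$ and note that ``stopping at the $K$th success'' is exactly the event $\{\bt_\infty < M\}$, on which the truncation is vacuous, while on $\{\bt_\infty \geq M\}$ (timeout) one falls back on the second Chernoff bound. Everything else --- the two tail estimates, and bookkeeping the constants so that $M$ stays $O(\log(1/\delta_0)/\eta_0)$ yet is large enough to make timeout improbable whenever $\eta > \eta_0$ --- is routine.
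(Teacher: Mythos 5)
Your proposal is correct and follows essentially the same route as the paper: both reduce, via \Cref{fact:z-channel2}, to estimating the probability $q\in[\tfrac23\eta,\eta]$ that a random nontrivial probe yields a non-$\zero^n$ readout, and then solve the standard problem of constant-factor estimation of an unknown coin bias with a cap of $O(\log(1/\delta_0)/\eta_0)$ flips. The only difference is the choice of standard stopping rule (wait for the $K$th success and output $K/\bt$, versus the paper's median of inverted geometric waiting times), and you supply the routine tail-bound analysis that the paper explicitly omits, including the correct handling of the truncation at $M$ probes.
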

\begin{proof}
    Recall \Cref{fact:z-channel2}: by doing random nontrivial probes, an algorithm can get samples from a random string that is non-$\zero^n$ with some probability $\eta'$ between $\frac23 \eta$ and~$\eta$.
    In order to find the factor-$5$ approximation $\etaest$ of $\eta$, it suffices for the algorithm to estimate $\eta'$ up to a factor of~$3$ or else certify $\eta' \leq \eta_0$.
    This is now a completely standard problem: estimating the bias of an $\eta'$-biased coin up to a factor of~$3$ using on the order of $1/\eta'$ flips, despite not knowing $\eta'$ in advance.
    The algorithm is the obvious one: repeatedly flip until getting ``heads'' (but never more than $O(1/\eta_0)$ times), convert the number of flips~$\bG$ into the estimate~$1/\bG$, then take the median of $O(\log(1/\delta))$ estimates.
    We omit the straightforward classical analysis.
\end{proof}

\subsection{Individual Recovery with multiplicative error}
We henceforth assume the algorithm from \Cref{lem:approx-eta} succeeded and that $\etaest$ is a factor-$5$ approximation of the true error rate~$\eta$.
We now describe how the algorithm can do ``Individual Recovery'' with multiplicative error.
A note: the sample complexities are stated in terms of the parameter~$\eta$; formally, the algorithm does not know~$\eta$, but it can use $5\etaest$ (which it knows) in its place, and the $O(\cdot)$ bounds are not affected.

We first show that the algorithm from \Cref{thm:individ} already achieves the desired multiplicative-error/sample tradeoff in the case of estimating~$\eta$:

\begin{proposition} \label{prop:est-error}
    Given $\etaest$ within a factor $5$ of~$\eta = 1-p(0^n)$, a version of \Cref{thm:individ} holds in which, for $B = 0^n$, the estimate $\widehat{p}(0^n)$ satisfies $|\widehat{p}(0^n) - p(0^n)| \leq \eps \eta$ except with failure probability at most~$\delta_0$, and the number of samples used is $m = O(\frac{1}{\eps^2 \eta}) \cdot \log(1/\delta_0)$.
\end{proposition}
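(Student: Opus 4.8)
The plan is to show that the estimator $\bH = (-1/2)^{|\bR|}$ already used in the proof of \Cref{thm:individ} with $B = 0^n$ (for which $\bR = \bA \star \bC$ and $\E[\bH] = p(0^n)$) has small variance when $\eta$ is small, and that this is exactly what converts the additive-$\eps_0$ guarantee of \Cref{thm:individ} into an additive-$\eps\eta$ guarantee at sample cost $O(\tfrac{1}{\eps^2\eta})\log(1/\delta_0)$ in place of the black-box $O(\tfrac{1}{\eps^2\eta^2})\log(1/\delta_0)$. Since the estimator and its expectation are unchanged, the only new ingredient is a second-moment bound followed by a variance-aware concentration inequality.

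First I would bound $\E[\bH^2] = \E[(1/4)^{|\bR|}]$. Conditioning on Charlie's outcome $\bC = C$ and invoking \Cref{obs:altered-z} with $B = 0^n$, the coordinates of $\bR$ are independent: a coordinate $j$ with $C_j = 0$ forces $\bR_j = \zero$ and contributes a factor $1$, while a coordinate with $C_j \neq 0$ has $\bR_j = \one$ with probability $\tfrac23$ and so contributes $\tfrac13 \cdot 1 + \tfrac23 \cdot \tfrac14 = \tfrac12$. Hence $\E[\bH^2 \mid \bC = C] = (1/2)^{|C^{\neq 0^n}|}$, which is $1$ when $C = 0^n$ and at most $\tfrac12$ otherwise. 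Averaging over $\bC \sim p$ gives $\E[\bH^2] \leq (1-\eta) + \tfrac{\eta}{2} = 1 - \tfrac{\eta}{2}$, whence $\Var[\bH] = \E[\bH^2] - (1-\eta)^2 \leq \tfrac{3\eta}{2} - \eta^2 \leq \tfrac{3\eta}{2}$.

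Next I would combine $\Var[\bH] = O(\eta)$ with the boundedness $\bH \in [-\tfrac12, 1]$. Writing $\widehat{p}(0^n)$ for the empirical mean of $m$ independent copies of $\bH$, a Bernstein bound yields $\Pr[\,|\widehat{p}(0^n) - p(0^n)| > \eps\eta\,] \leq 2\exp(-\Omega(m\eps^2\eta))$ --- here one uses $\eps \leq 1$ so that the deviation $\eps\eta$ is dominated by the variance scale $\tfrac{3\eta}{2}$ rather than the boundedness scale --- so $m = O(\tfrac{1}{\eps^2\eta})\log(1/\delta_0)$ samples suffice; equivalently one could split into $O(\log(1/\delta_0))$ batches of size $O(\tfrac{1}{\eps^2\eta})$, apply Chebyshev to each, and take the median. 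Since the algorithm does not actually know $\eta$, it uses $5\etaest$ (within a constant factor of $\eta$ by \Cref{lem:approx-eta}) in the definition of $m$, which affects only the hidden constant. The classical post-processing remains the $O(mn)$ computation of \Cref{thm:individ}.

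The one genuinely new step is the second-moment estimate: the point is that $\E[\bH^2 \mid \bC = C]$ decays geometrically in the number of nonzero coordinates of $C$, so the whole second moment sits at $1 - \Theta(\eta)$; once that is in hand, the rest is a routine application of a variance-sensitive tail bound together with the reduction $\etaest \approx \eta$.
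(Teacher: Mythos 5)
Your proof is correct, and it rests on the same underlying fact as the paper's: the estimator $\bH=(-1/2)^{|\bR|}$ from \Cref{thm:individ} is left unchanged, and the factor of $1/\eta$ is saved because the estimator's fluctuations are of order $\eta$ rather than order $1$. The execution differs in one respect. You establish $\Var[\bH]=O(\eta)$ by an explicit conditional second-moment computation --- $\E[\bH^2\mid \bC=C]=(1/2)^{|C^{\neq 0^n}|}\le\tfrac12$ for $C\neq 0^n$, hence $\E[\bH^2]\le 1-\eta/2$ and $\Var[\bH]\le\tfrac{3\eta}{2}$ --- and then invoke Bernstein. The paper sidesteps the second-moment calculation entirely: it passes to $\overline{\bH}=1-\bH$, which is nonnegative, bounded by $2$, and has mean exactly $\eta$, and applies a multiplicative Chernoff bound for bounded nonnegative variables; that bound implicitly supplies $\E[\overline{\bH}^2]\le 2\E[\overline{\bH}]=2\eta$, i.e.\ the same variance scale you derive by hand. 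Your route is slightly more work but also more informative, since it identifies the conditional second moment exactly via the product structure of the readout; the paper's is a one-line reduction to a standard inequality. The remaining details you address --- the boundedness $\bH\in[-\tfrac12,1]$, the use of $\eps\le 1$ so that the variance term dominates in Bernstein, and substituting the known constant-factor approximation $5\etaest$ for the unknown $\eta$ when setting $m$ --- match what the paper does or leaves implicit, so there is no gap.
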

\begin{remark}
    The success event here is equivalent to the estimate $\widehat{\eta} = 1-\widehat{p}(0^n)$ satisfying the inequality $(1-\eps) \eta \leq \widehat{\eta}\leq (1+\eps)\eta$.
\end{remark}
\begin{proof}
    The algorithm used is the same as the one in \Cref{thm:individ} (with $B = 0^n$); only the analysis changes.
    Recall that the algorithm's estimate is the empirical mean of $\bH = (-1/2)^{|\bR|}$, a random variable whose true mean is~$p(0^n) = 1-\eta$.
    Equivalently we may consider the random variable $\overline{\bH} = 1 - \bH$, which has true mean~$\eta$ and which is supported in~$[0,2]$.
    But now a standard multiplicative Chernoff bound shows that the empirical mean $\widehat{\eta}$ of~$\overline{\bH}$ after $O(1/(\eps^2\eta)) \cdot \log(1/\delta_0)$ samples indeed satisfies $(1-\eps) \eta \leq \widehat{\eta}\leq (1+\eps)\eta$.
\end{proof}

\begin{proposition}
    A trivial modification of \Cref{thm:individ} also achieves, for any $B \neq 0^n$, an estimate $\widehat{p}(B)$ satisfying $|\widehat{p}(B) - p(B)| \leq \eps \eta$ except with failure probability at most~$\delta_0$, using $m = O(\frac{1}{\eps^2 \eta}) \cdot \log(1/\delta_0)$ samples.
\end{proposition}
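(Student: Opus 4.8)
The plan is to reuse the Individual Recovery estimator from \Cref{thm:individ} essentially verbatim, with only a cosmetic change to the variance bound exploiting the structure of the mixture in \Cref{eqn:mix}. Recall that for a target string $B$, Alice's estimator is the empirical mean of $\bH = (-1/2)^{|(\bA\star B) +_2 \bR|}$, and we showed $\E[\bH \mid \bC = C]$ equals $1$ if $C = B$ and $0$ otherwise, so $\E[\bH] = p(B)$. For $B \neq 0^n$ we have $p(B) \leq \perr$-mass $\leq \eta$, which is the source of the improvement: the random variable $\bH$ is ``usually'' contributing like the $C = 0^n$ case.

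The key step is the variance computation. First I would note $\bH \in [-\tfrac12, 1]$ always (its absolute value is a power of $1/2$), and then condition on whether the channel outcome $\bC$ equals $0^n$ or not. Conditioned on $\bC = 0^n$, which happens with probability $1-\eta$: since $B \neq 0^n$, we have $(0^n)^{\neq B} = C^{\neq B}$ is non-$\zero$ in at least one coordinate, so at least one coordinate $\by_t$ of $(\bA\star B)+_2\bR$ has the ``$C_t \neq B_t$'' distribution — namely $\zero$ w.p.\ $\tfrac13$, $\one$ w.p.\ $\tfrac23$ — independent of the rest. Pulling that coordinate out of the product, $\E[\bH^2 \mid \bC = 0^n] \leq \bigl(\tfrac13 \cdot 1 + \tfrac23 \cdot \tfrac14\bigr) \cdot 1 = \tfrac12$ (using $(-1/2)^{2\by_t} \leq 1$ for the remaining factors). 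Conditioned on $\bC \neq 0^n$, which happens with probability $\eta$, I crudely bound $\E[\bH^2 \mid \bC \neq 0^n] \leq 1$. Hence $\E[\bH^2] \leq (1-\eta)\cdot\tfrac12 + \eta \cdot 1 \leq \tfrac12 + \eta$, and therefore $\Var[\bH] \leq \E[\bH^2] = O(\eta)$ (absorbing the constant and noting $\eta \leq 1$; in fact $\Var[\bH] \le \tfrac12 + \eta - p(B)^2 = O(\eta)$ suffices since $\eta$ is the dominant term when $\eta$ is small, and for $\eta = \Omega(1)$ the bound $O(\eta)$ is trivially $O(1)$). Then a standard Bernstein/Chernoff bound gives that $m = O\bigl(\frac{\eta}{(\eps\eta)^2}\bigr)\cdot\log(1/\delta_0) = O\bigl(\frac{1}{\eps^2\eta}\bigr)\cdot\log(1/\delta_0)$ samples suffice for the empirical mean to be within additive $\eps\eta$ of $p(B)$ except with probability $\delta_0$. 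The post-processing time and the rest of the structure are unchanged from \Cref{thm:individ}, and \Cref{obs:alter-trick} handles the general $B$ just as before (the altered-channel reinterpretation is what lets us assume the single probe batch works simultaneously; here we only need it for one fixed $B$).

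I expect the main obstacle to be purely bookkeeping: getting the variance bound to come out cleanly as $O(\eta)$ rather than $O(1)$. The subtlety is that one must genuinely use $B \neq 0^n$ — that is exactly what forces at least one ``flippable'' coordinate in the $\bC = 0^n$ branch, which is what knocks $\E[\bH^2 \mid \bC = 0^n]$ below $1$; without it (the $B = 0^n$ case) this argument gives nothing and one instead needs the shifted-variable trick of \Cref{prop:est-error}. One should double-check that a single flippable coordinate is enough: indeed it is, since that one coordinate contributes a factor with second moment $\tfrac13 + \tfrac23\cdot\tfrac14 = \tfrac12 < 1$ and the other $n-1$ factors each have second moment at most $1$, so the product's second moment is at most $\tfrac12$. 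Everything else — the Chernoff invocation, the sample-count arithmetic $\frac{\Var}{(\eps\eta)^2} = \frac{O(\eta)}{\eps^2\eta^2} = \frac{O(1)}{\eps^2\eta}$, and the claim that this is a ``trivial modification'' — is routine. One caveat worth a sentence in the writeup: for this $B \neq 0^n$ estimator we are implicitly assuming $\etaest$ (hence $5\etaest$, which the algorithm uses in place of $\eta$) is a valid factor-$5$ approximation, as guaranteed by \Cref{lem:approx-eta}; the $O(\cdot)$ sample bound is unaffected by replacing $\eta$ with $5\etaest$.
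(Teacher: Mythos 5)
Your variance bound contains a genuine error that sinks the argument. You correctly compute $\E[\bH^2 \mid \bC = 0^n] \leq \tfrac12$ and conclude $\E[\bH^2] \leq \tfrac12 + \eta$, but $\tfrac12 + \eta$ is \emph{not} $O(\eta)$: as $\eta \to 0$ this quantity tends to $\tfrac12$, a constant. (Your parenthetical has the logic backwards --- when $\eta$ is small, the $\tfrac12$ term dominates, not the $\eta$ term.) The issue is not bookkeeping: the unmodified estimator $\bH = (-1/2)^{|(\bA \star B) +_2 \bR|}$ genuinely has variance $\Theta(2^{-|B|})$ coming from the $\bC = 0^n$ branch alone, because even when the channel does nothing, the randomness of Alice's probe $\bA$ makes $(-1/2)^{|\bA \star B|}$ fluctuate. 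With variance $\Theta(1)$ and target accuracy $\eps\eta$, Chernoff/Bernstein gives only $m = O(1/(\eps^2\eta^2))$, which is exactly the $1/\eta$ loss the proposition is designed to avoid.

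The paper's ``trivial modification'' is precisely the fix you are missing: replace $\bH$ by the control-variate estimator $\bH' = \bH - (-1/2)^{|\bA \star B|}$. The subtracted term depends only on $\bA$ and $B$, so Alice can compute it; it has mean zero when $B \neq 0^n$ (each nonzero coordinate contributes a factor $\tfrac13 \cdot 1 + \tfrac23 \cdot (-\tfrac12) = 0$), so $\E[\bH'] = p(B)$ still; and crucially $\bH' = 0$ identically whenever $\bC = 0^n$, since then $\bR = \zero^n$ and the two terms cancel. Hence $\bH'$ is nonzero with probability at most $\eta$, giving $\E[(\bH')^2] \leq 4\eta$, and Bernstein then yields $m = O(\frac{1}{\eps^2\eta})\log(1/\delta_0)$. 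Your observation that $B \neq 0^n$ forces a ``flippable'' coordinate is the right ingredient, but it must be used to certify that the subtracted term is unbiased, not to bound the second moment of the raw estimator.
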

\begin{proof}
    Rather than empirically estimating the mean of $\bH = (-1/2)^{|\bA \star B +_2 \bR|}$, the algorithm instead empirically estimates the mean of $\bH' = \bH - (-1/2)^{|\bA \star B|}$, a random variable bounded in~$[-2,2]$.
    (Note that Alice knows $B$ and also each probe string $\bA$, hence can compute $(-1/2)^{|\bA \star B|}$ herself.)
    It is easy to see that $\E[ (-1/2)^{|\bA \star B|}] = 0$ using $B \neq 0^n$.
    Thus $\bH'$ remains an unbiased estimator for $p(B)$; i.e., $\E[\bH'] = p(B)$.
    But furthermore note that $\bH'$ is almost always~$0$; specifically, whenever the channel outcome $\bC$ is~$0^n$ (probability $1-\eta$), we have $\bR = \bA \star 0^n = \zero^n$ and hence $\bH' = (-1/2)^{|\bA \star B|} - (-1/2)^{|\bA \star B|} = 0$.
    Thus using $|\bH'| \leq 2$ we trivially conclude $\E[(\bH')^2] \leq 4\eta$.
    But now it follows from the Bernstein inequality (see, e.g.,~\cite[Ch.~2, Prop.~2.4]{Wai19}) that to estimate the mean of a random variable~$\bH'$ that is bounded in~$[-2,2]$ and has~$\E[(\bH')^2] = s$, it suffices to use $\frac{s+2\gamma/3}{\gamma^2} \ln(2/\delta_0)$ samples to achieve additive error~$\gamma$ except with probability at most~$\delta_0$.
    Thus taking $\gamma = \eps \eta$ and using $s \leq 4\eta$ indeed completes the proof.
\end{proof}

\subsection{Population Recovery with multiplicative error}  

Combining the results from the previous section on Individual Recovery with the reduction in \Cref{sec:population} immediately proves our \Cref{thm:intro-pop2} (in the case of Pauli channels). 

\section{Further extensions: general channels and measurement noise}    \label{sec:extend}

\subsection{Pauli error rates of general channels}  \label{sec:general-chan}
With very minor effort we can now upgrade our algorithm to learn the ``Pauli error rates'' of a \emph{general} quantum channel, thereby fully establishing our \Cref{thm:intro-pop}.

We recall the following definitions/facts (see, e.g.,~\cite[Lem.~5.2.4]{Dan15}):
\begin{definition}
    Let $\Lambda$ denote an arbitrary $n$-qubit quantum channel.
    Its \emph{Pauli twirl} $\Lambda_P$ is the $n$-qubit quantum channel defined by
    \[
        \Lambda_P \rho = \E_{\bT \sim \{0,1,2,3\}^n} [\sigma_{\bT}^\dagger (\Lambda \sigma_{\bT} \rho \sigma^\dagger_{\bT}) \sigma_{\bT}].
    \]
    The channel $\Lambda_P$ is itself a Pauli channel; the associated probabilities $p(C)$ are called the \emph{Pauli error rates} of~$\Lambda$.
\end{definition}
\begin{fact}
    Suppose we write $K_j$ for the Kraus operators of~$\Lambda$, so $\Lambda \rho = \sum_j K_j \rho K_j^\dagger$.
    Further suppose that $K_j$ is represented in the Pauli basis as $K_j = \sum_{C \in \{0,1,2,3\}^n} \alpha_{j,C} \sigma_C$.
    Then $\Lambda$'s Pauli error rates are given by $p(C) = \sum_j |\alpha_{j,C}|^2$.
\end{fact}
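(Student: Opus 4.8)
The plan is a direct computation: expand the Pauli twirl in the Pauli basis and observe that averaging over the random twirl string $\bT$ annihilates every cross term. First I would record the elementary conjugation identity that drives everything. Since each $\sigma_a$ is both unitary and Hermitian, so is $\sigma_\bT$, hence $\sigma_\bT^\dagger = \sigma_\bT = \sigma_\bT^{-1}$; combining the coordinatewise commutation relation $\sigma_\bT\sigma_C = (-1)^{|\bT\star C|}\sigma_C\sigma_\bT$ with $\sigma_\bT^2 = \Id$ gives
\[
    \sigma_\bT^\dagger\, \sigma_C\, \sigma_\bT = (-1)^{|\bT\star C|}\,\sigma_C \qquad \text{for all } \bT, C \in \{0,1,2,3\}^n .
\]
Note that no \emph{phase} ambiguity enters here: the Pauli fusion rule $\sigma_a\sigma_b = (\text{phase})\cdot\sigma_{a\oplus b}$ is never invoked, only the exact $\pm 1$ relation above together with $\sigma_\bT^2 = \Id$.

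Next I would substitute $K_j = \sum_C \alpha_{j,C}\sigma_C$, and correspondingly $K_j^\dagger = \sum_{C'}\overline{\alpha_{j,C'}}\sigma_{C'}$ (using $\sigma_{C'}^\dagger = \sigma_{C'}$), into $\Lambda_P\rho = \E_\bT[\sigma_\bT^\dagger\, \Lambda(\sigma_\bT\rho\sigma_\bT^\dagger)\,\sigma_\bT] = \E_\bT\bigl[\sum_j (\sigma_\bT^\dagger K_j\sigma_\bT)\,\rho\,(\sigma_\bT^\dagger K_j^\dagger\sigma_\bT)\bigr]$, and apply the conjugation identity to each factor. This produces
\[
    \Lambda_P\rho = \sum_j\sum_{C,C'} \alpha_{j,C}\,\overline{\alpha_{j,C'}}\; \E_\bT\!\left[(-1)^{|\bT\star C| + |\bT\star C'|}\right] \sigma_C\,\rho\,\sigma_{C'} .
\]
The one step that deserves care is the character sum. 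Using \Cref{fact:dot-product} coordinatewise, $\bT_t\star C_t + \bT_t\star C'_t \equiv \bT_t\star(C_t\oplus C'_t)\pmod 2$, so the exponent equals $|\bT\star(C\oplus C')|$; since the coordinates of $\bT$ are independent, the expectation factors as $\prod_t \E_{\bT_t}[(-1)^{\bT_t\star(C\oplus C')_t}]$, and each factor is $1$ when $(C\oplus C')_t = 0$ and $\tfrac14(2-2)=0$ otherwise (exactly two of $\sigma_0,\sigma_1,\sigma_2,\sigma_3$ commute with any fixed nontrivial Pauli). Hence $\E_\bT[\cdots] = [C = C']$ exactly, the double sum collapses, and
\[
    \Lambda_P\rho = \sum_C\Bigl(\sum_j |\alpha_{j,C}|^2\Bigr)\,\sigma_C\,\rho\,\sigma_C = \sum_C\Bigl(\sum_j |\alpha_{j,C}|^2\Bigr)\,\sigma_C\,\rho\,\sigma_C^\dagger ,
\]
which is exactly the Pauli channel of \Cref{eqn:PC} with error rates $p(C) = \sum_j |\alpha_{j,C}|^2$, as claimed.

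There is no real obstacle: the whole argument is the routine ``twirl kills the cross terms'' calculation, and the only places where one should slow down are (i) confirming that the conjugation identity is genuinely phase-free, and (ii) checking that the orthogonality character sum is an \emph{exact} indicator rather than merely small. As a sanity check that the resulting numbers really do form a probability distribution, one can verify $\sum_C p(C) = 1$ directly from trace preservation $\sum_j K_j^\dagger K_j = \Id$ together with $\tr(\sigma_C\sigma_{C'}) = 2^n\,[C=C']$, which gives $\sum_{j,C}|\alpha_{j,C}|^2 = 2^{-n}\sum_j \tr(K_j^\dagger K_j) = 1$.
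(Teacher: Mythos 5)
Your proof is correct. The paper states this as a Fact without proof (deferring to the cited reference \cite{Dan15}), and your argument is exactly the standard twirling computation behind it: the phase-free conjugation identity $\sigma_T^\dagger \sigma_C \sigma_T = (-1)^{|T\star C|}\sigma_C$, the exact vanishing of the character sum $\E_{\bT}[(-1)^{|\bT\star(C\oplus C')|}]$ for $C\neq C'$, and the resulting collapse to $\sum_C \bigl(\sum_j|\alpha_{j,C}|^2\bigr)\sigma_C\rho\sigma_C^\dagger$ are all as they should be, and the trace-preservation sanity check is a nice touch.
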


It's easy to see that, given access to a general channel~$\Lambda$, a learner Alice can simulate access to its Pauli twirl~$\Lambda_P$: whenever Alice wishes to pass $\rho$ through $\Lambda_P$, she instead chooses $\bT \sim \{0,1,2,3\}^n$ uniformly at random, passes $\sigma_{\bT} \rho \sigma_{\bT}^\dagger$ through~$\Lambda$,  and replaces the channel output~$\tau$ with $\sigma_{\bT}^\dagger \tau \sigma_{\bT}$.

In our context of learning Pauli error rates, this simulation becomes particularly simple.
Recall that our algorithm for Pauli channels only ever passes pure states of the form $\ket{\chi^{A_1}_{+}} \ket{\chi^{A_2}_{+}} \cdots \ket{\chi^{A_n}_{+}}$ through the  channel, for $A \in \{1,2,3\}^n$.
Further, the channel output is always measured in the associated Pauli bases, the $j$th qubit of the output measured in the basis $\ket{\chi^{A_n}_{\pm}}$.
The effect of simulating the Pauli twirl with $\sigma_{\bT}$ is simply to replace the input $\ket{\chi^{A_j}_{+}}$ to qubit $j$ with the input $\ket{\chi^{A_j}_{(-1)^{A_j \star \bT_j}}}$, and to add $A \star \bT$ to the measurement outcomes.
Thus we may deduce the full version of \Cref{thm:intro-pop2} (concerning learning Pauli error rates of general channels) from the already-established special case of learning Pauli channels.

\subsection{Tolerating measurement errors} \label{sec:tolerating}
It is also straightforward to see that our algorithm can tolerate a mild form of measurement error.
Suppose that we have an imperfect $1$-qubit measuring device that is used to implement the three Pauli-basis measurements. 
More precisely, we assume it has the following property:
When applied to a qubit in a Pauli eigenvalue state, the measuring device ``fails'' (say, reads out ``$\texttt{?}$'') with probability~$\nu$, and otherwise behaves ideally.
Here~$\nu$ is a parameter that we assume is known to the learner through estimation, and that measurement failures are independent events.

As discussed in the paragraph just preceding \Cref{sec:individ}, our algorithm for estimating any $p(B)$ is effectively performing the standard ``Individual Recovery algorithm'' for the \emph{binary erasure channel} with erasure probability~$\frac13$.
(Recall that we actually have the $Z$-channel with crossover probability~$\frac13$ applied to the binary string $C^{\neq B}$, but that the erasure channel algorithm only uses the locations of the $\one$'s in the received string, and thus works equally well for the $Z$-channel.)
The effect of measuring device failures is to replace the erasure probability~$\frac13$ with $r \coloneqq \nu + (1-\nu)\frac13$.
So long as $r \leq \frac12$, the standard recovery algorithm for probability $r$-erasures works just as well~\cite{DRWY12}: the only change needed is that the factor ``$(-1/2)$'' appearing in \Cref{thm:individ}'s definition of~$\bH$ needs to be replaced by $-r/(1-r)$.
(Note that this quantity has magnitude bounded by~$1$ if and only if~$r \leq \frac12$.)
But the condition $r \leq \frac12$ is equivalent to $\nu \leq \frac14$, and this justifies our \Cref{thm:pop4}.

(In fact, for erasure probability $\frac12 < r < 1$, much more sophisticated algorithms~\cite{DOS20,PSW17} can succeed at Individual Recovery, at the expense of increasing the sample complexity from the order of $1/\eps^2$ to the order of~$1/\eps^{2r/(1-r)}$; but for simplicity, we ignore pursuing this extension.)

\section{An alternative, Fourier approach} \label{sec:fourier}
Here we give an alternative algorithm for learning Pauli channels, using a perspective from Boolean Fourier analysis; see~\cite[Chaps.~1, 3]{OD14} for background and notation.

For Pauli channels, the $\F_2$-Fourier transform relates the error rates and the channel eigenvalues. 
The Pauli operators themselves are the eigenvectors of a Pauli channel, and we can easily compute the eigenvalue associated to~$\sigma_A$  using the relation $\sigma_A \sigma_C = (-1)^{A\star C}\sigma_C \sigma_A$ via
\[
    \sigma_A \mapsto \sum_{C \in \{0,1,2,3\}^n} p(C) \cdot \sigma_C \sigma_A \sigma_C^\dagger = \sum_{C \in \{0,1,2,3\}^n} p(C) \cdot (-1)^{\sum_{i=1}^n (A \star C)_i} \sigma_A = \lambda_A \sigma_A,
\]
so that $\lambda_A = \E_{\bC \sim \{0,1,2,3\}^n}\bigl[ 2^{2n} p(\bC) \cdot (-1)^{\sum_{i=1}^n (A \star \bC)_i}\bigr]$. 
This clearly resembles an $\F_2$-Fourier transform. 

To make this connection more explicit, in the remainder of this section we will identify the elements of $\{0,1,2,3\}$ with their base-$2$ representations in~$\F_2^2$.
Let us use overline to denote the swapping operation on two bits; i.e., $\overline{a_1a_2} = a_2a_1$ for $a_1,a_2 \in \F_2$.
We extend the notation $n$-fold to vectors $A \in \F_2^{2n} \cong (\F_2^2)^n$.
(Equivalently, we have $\ol{0} = 0$, $\ol{1} = 2$, $\ol{2} = 1$, $\ol{3} = 3$, and we extend the notation coordinate-wise to $A \in \{0,1,2,3\}^n$.)
Now define the \emph{symplectic dot product}
\[
    \la A, C \ra = \ol{A} \cdot C = \sum_{i=1}^n (A \star C)_i \mod 2,
\]
where $A \cdot C$ denotes the usual dot product on~$\F_2^{2n}$. 
A Pauli channel eigenvalue is now equivalently written in two ways as 
\begin{equation*}  
    \lambda_A = \E_{\bC \sim \F_2^{2n}}\bigl[ 2^{2n} p(\bC) \cdot (-1)^{\langle A,\bC\rangle}\bigr] = \E_{\bC \sim \F_2^{2n}}\bigl[ 2^{2n} p(\bC) \cdot (-1)^{\ol{A}\cdot\bC}\bigr].
\end{equation*}
Let us write $\varphi$ for the probability \emph{density} (vis-a-vis the uniform distribution) associated to~$p$; i.e., $\varphi(C) = 2^{2n} p(C)$. 
Then the Fourier transform $f = \tilde{\varphi}$ is given by
\begin{equation}    \label{eqn:fourier1}
    f(A) = \tilde{\varphi}(A) = \E_{\bC \sim \F_2^{2n}}[\varphi(\bC) (-1)^{A \cdot \bC}] = \E_{\bC \sim p}[(-1)^{A \cdot \bC}] = \lambda_{\ol{A}}.
\end{equation}
Observe that $f$ (and equivalently $\lambda$) are functions $f : \F_2^{2n} \to [-1,1]$ and that $p = \tilde{f}$. 
Such group character averages were considered in the context of quantum noise estimation in~\cite{Helsen2019}. 
While we can talk interchangeably about the Fourier coefficients of the density $\varphi$ and the channel eigenvalues $\lambda$ (as they are related by $f(A) = \lambda_{\ol{A}}$), we will focus on $f$ in what follows. 

We see from \Cref{eqn:fourier1} that
\begin{equation}    \label{eqn:estim}
    f(A) = \E_{\bC \sim p}[(-1)^{\la \ol{A}, C \ra}] = \E_{\bC \sim p}[(-1)^{\sum_t (\ol{A}  \star C)_t}],
\end{equation}
and as we now describe this means Alice can straightforwardly estimate $f(A)$ for any~$A$ of her choosing.

Let's extend \Cref{def:nontrivial-probe} of ``nontrivial probe'' to allow not just for $A \in \{1,2,3\}^n$ but any $A \in \{0,1,2,3\}^n$; we omit the adjective ``nontrivial'' in this more general case.
To handle coordinates~$j$ where $A_j = 0$, Alice can simply put any qubit $\ket{\chi}$ into the $j$th position of her state $\ket{\psi_A}$, ignore the $j$th position coming out of the channel, and automatically treat the $j$th readout bit as~$\zero$.
In this way, \Cref{fact:readout} still holds: for any probe $A \in \{0,1,2,3\}^n$ and any string $C \in \{0,1,2,3\}^n$ drawn by Charlie, the readout is $R = A \star C \in \{\zero,\one\}^n$.
It follows that Alice can empirically estimate the right-hand side of \Cref{eqn:estim} by repeatedly probing the channel with~$\ol{A}$ and averaging the following function of $\bR$, the readout:~$(-1)^{\sum_t \bR_t}$.
This yields $f(A)$ to additive precision~$\eps$ with confidence at least $1-\delta$, using  $O(1/\eps^2) \cdot \log(1/\delta)$ probes; we refer to this as ``efficient estimation''.

We now see that Alice has (noisy) query access to~$f : \F_2^{2n} \to [-1,1]$, and her goal is to estimate the large values of $p = \tilde{f}$.
This task is highly reminiscent of the task solved by the Goldreich--Levin learning algorithm~\cite{GL89}.
The minor differences are that Goldreich--Levin typically assumes \emph{perfect} query access to some~$f : \F_2^{2} \to \{-1,1\}$, and has the normalization that $\sum_C \tilde{f}(C)^2 = 1$, rather than our normalization of $\sum_C \tilde{f}(C) = \sum_C p(C) = 1$.
Still, if one ``unrolls'' the Goldreich--Levin algorithm in this context, one gets almost the same solution for learning Pauli channels as described in \Cref{sec:population}: reduction from Population Recovery to Individual Recovery.

\subsection{The Goldreich--Levin approach}
In a typical exposition of the Goldreich--Levin algorithm (e.g.~\cite[Ch.~3.5]{OD14}, which we'll follow), one assumes Alice has perfect query access to an $f : \F_2^n \to \{-1,1\}$.
Herein we sketch the alterations to this exposition that are needed for learning Pauli channels. 
We note that a ``quantum Goldreich--Levin'' algorithm was given by Montanaro and Osborne~\cite{MontanaroOsborne2010} for learning the class of quantum boolean functions, which includes the unitary Pauli channels, but this makes explicit use of the unitary property and hence doesn't immediately apply to general Pauli channels. 

One basic subroutine in the Goldreich--Levin algorithm (akin to ``Individual Recovery'')  is using query access to~$f$ to efficiently estimate $\tilde{f}(B)$ for various~$B$.
This is done (see~\cite[Prop.~3.30]{OD14}) via straightforward empirical estimation:
\begin{equation}    \label{eqn:basic-Fourier}
    \tilde{f}(B) = \E_{\bA \sim \F_2^{2n}}[f(\bA) (-1)^{\bA \cdot B}].
\end{equation}
Recall that in our setting, Alice can only access $f(\bA)$ by empirically estimating it via \Cref{eqn:estim}.
Inserting this into the above, we get
\[
    \tilde{f}(B) = \E_{\bA \sim \F_2^{2n}} \E_{\bC \sim p}[(-1)^{\sum_t (\ol{\bA}  \star \bC)_t + \bA \cdot B}].
\]
Thus as needed in Goldreich--Levin, Alice can efficiently estimate this for any~$B$ of her choosing by picking uniformly random~$\bA \in \{0,1,2,3\}^n$, probing the channel with~$\ol{\bA}$, and averaging the following function of~$\bR$, the readout: $(-1)^{\sum_t \bR_t + \bA \cdot B}$.
Indeed the reader will note that this method is almost the same as the one used in \Cref{thm:individ}!
The essential difference is that $\bA$ is uniform on $\{0,1,2,3\}^n$ rather than $\{1,2,3\}^n$, which effectively makes the ``crossover probability''~$\frac12$ instead of~$\frac13$, and hence the factor $(-1/2) = -\frac{1/3}{1-1/3}$ becomes $(-1) = -\frac{1/2}{1-1/2}$.
Note that this difference implies that the Goldreich--Levin approach does not immediately tolerate measurement failures as in \Cref{sec:tolerating}. 

As mentioned earlier, Goldreich--Levin typically assumes $f : \F_2^n \to \{-1,1\}$ and hence we have $\sum_{C \in \F_2^n} \tilde{f}(C)^2 = 1$; its goal is to find all $B$ with $|\tilde{f}(B)| \geq \eps$, knowing that there are automatically at most $1/\eps^2$ such~$B$.
It accomplishes this via a ``branch-and-prune'' strategy that relies on the ability to estimate $\sum_{C' \in \F_2^{n-k}} \tilde{f}(\beta, C')^2$ for any prefix $\beta \in \F_2^k$.
In our setup, with $f : \F_2^{2n} \to [-1,1]$, we instead know a priori that $p = \tilde{f}$ satisfies $\sum_C \tilde{f}(C) = 1$, and our goal is to find all $B$ with $|\tilde{f}(B)| \geq \eps$.
Thus the search is even easier than in Goldreich--Levin, as the same branch-and-prune strategy works with non-squared Fourier coefficients.
Following the strategy gives the same Population-to-Individual Recovery algorithm as in \Cref{sec:population}.

\subsection{Final remarks}
As mentioned earlier, the techniques used in the previous works~\cite{FW20,Harper2020,Harper2021} on Pauli channel estimation are Fourier-based.
The paper~\cite{FW20} achieves SPAM tolerance, and manages to trade some measurement complexity for channel-reuse; on the other hand, its bounds have a dependency on the channel eigenvalue gap~$\Delta = \min_{A \neq 0^n} \{1-|\lambda_A|\}$, which may be arbitrarily small. 
As shown in the previous section, one can recover our (SPAM-less) Pauli estimation results via the Fourier approach with no dependence on~$\Delta$ and without assumptions about the noise or the support.

We believe that it is possible to obtain a common generalization of the results in~\cite{FW20} and the present paper that achieves the best of both worlds via this Fourier approach: SPAM-robust and efficient Pauli channel estimation with no dependence on $\Delta$.
We leave this for future work.

\acknowledgements

We thank Robin Harper for discussions about Pauli channels.
This work was supported by ARO grant W911NF2110001. 
R.O.\ is additionally supported by NSF grant FET-1909310. 
This material is based upon work supported by the National Science Foundation under grant numbers listed above.
Any opinions, findings and conclusions or recommendations expressed in this material are those of the author and do not necessarily reflect the views of the National Science Foundation (NSF).


\end{document}